\title{Penalized Maximum Likelihood Estimator for Mixture of von Mises-Fisher Distributions}
\author{Tin Lok James Ng}
\institute{Tin Lok James Ng \at
              School of Computer Science and Statistics \\
              Trinity College Dublin, Ireland \\
              \email{ngja@tcd.ie}           
           }
\date{Received: date / Revised: date}
\begin{document}
\doublespacing
\maketitle

\begin{abstract}
The von Mises-Fisher distribution is one of the most widely used probability distributions to describe directional data. Finite mixtures of von Mises-Fisher distributions have found numerous applications. However, the likelihood function for the finite mixture of von Mises-Fisher distributions is unbounded and consequently the maximum likelihood estimation is not well defined. To address the problem of likelihood degeneracy, we consider a penalized maximum likelihood approach whereby a penalty function is incorporated. We prove strong consistency of the resulting estimator. An Expectation-Maximization algorithm for the penalized likelihood function is developed and experiments are performed to examine its performance.

\keywords{mixture of von Mises-Fisher distributions \and penalized maximum likelihood estimation \and strong consistency}
\end{abstract}

\section{Introduction}
\label{sec:intro}
In many areas of statistical modelling, data are represented as directions or unit length vectors \citep{Mardia1972, Jupp1995, Mardia2000}. The analysis of directional data has attracted much research interests in various disciplines, from hydrology \citep{Chen2013} to biology \citep{Boomsma2006}, and from image analysis \citep{Zhe2019} to text mining \citep{Banerjee2005}. The von Mises-Fisher (vMF) distribution is one of the most commonly used distributions to model data distributed on the surface of the unit hypersphere \citep{Fisher1953, Mardia2000}. The vMF distribution has been applied successfully in many domains (e.g., \citep{Sinkkonen2002, Mcgraw2006, Bangert2010}).
\\\\
A mixture of vMF distributions \citep{Banerjee2005} assumes that each observation is drawn from one of the $p$ vMF distributions. Applications of the vMF mixture model are diverse, including image analysis \citep{Mcgraw2006} and text mining \citep{Banerjee2005}. More recently, it has been shown that the vMF mixture model can approximate any continuous density function on the unit hypersphere to arbitrary degrees of accuracy given sufficient numbers of mixture component \citep{Ng2020b}.
\\\\
Various estimation strategies have been developed to perform model estimation, including the maximum likelihood approach \citep{Banerjee2005} and Bayesian methods \citep{Bagchi1991, Taghia2014}. The maximum likelihood approach, which is typically performed using the Expectation-Maximization (EM) algorithm \citep{Dempster1977, Banerjee2005}, is among the most popular approach to parameter estimation. However, as we show in Section \ref{sec:likelihood} the likelihood function is unbounded from above, and consequently a global maximum likelihood estimate (MLE) fails to exist.
\\\\
The unboundedness of likelihood function occurs in various mixture modelling context, particularly for mixture models with location-scale family distributions including the mixture of normal distributions \citep{Ciuperca2003, Chen2008} and the mixture of Gamma distributions \citep{Chen2016}. Various approaches have been developed in order to tackle the likelihood degeneracy problem, including resorting to local estimates \citep{Peters1978}, compactification of the parameter space \citep{Redner1981}, and constrained maximization of the likelihood function \citep{Hathaway1985}. 
\\\\
An alternative solution to the likelihood degeneracy problem is to penalized the likelihood function such that the resulting penalized likelihood function is bounded and the existence of the penalized MLE is guaranteed. The approach of penalized maximum likelihood was applied to normal mixture models \citep{Ciuperca2003, Chen2008}, two-parameter Gamma mixture models \citep{Chen2016}. A penalized likelihood approach has also a Bayesian interpretation \citep{Goodd1971, Ciuperca2003}, whereby the penalized likelihood function corresponds to a posterior density and the penalized maximum likelihood solution to the maximum a posterior estimate.
\\\\
Previously, the penalized maximum likelihood approach is applied to study the mixture of von-Mises distributions \citep{Chen2007b} where consistency results were obtained. The von-Mises distribution is a special case of the von Mises-Fisher distribution defined on the circle. We generalize the results in \cite{Chen2007b} to the arbitrary dimensional sphere. The consistency proof in \cite{Chen2007b} relies heavily on the univariate properties of the von Mises distribution and generalization of the result to higher dimensions is not straightforward. In this paper we prove a few useful technical lemmas before proving the main results. To handle the non-identifiability of the mixture models, we use the framework of \cite{Redner1981} to obtain consistency in the quotient space. 
\\\\
In this paper, we also consider the penalized likelihood approach to tackle the problem of likelihood unboundedness for the mixture of vMF distributions. We incorporate a penalty term into the likelihood function and maximize the resulting penalized likelihood function. We study conditions on the penalty function to ensure consistency of the penalized maximum likelihood estimator (PMLE). We develop an Expectation-Maximization algorithm to perform model estimation based on the penalized likelihood function. The rest of the paper is structured as follows. Section \ref{sec:background} introduces the background on vMF mixtures and key notations used in the subsequent sections. The problem of likelihood degeneracy is formally presented in Section \ref{sec:likelihood}. Section \ref{sec:MLE} develops the penalized maximum likelihood approach and discusses conditions on the penalty function to ensure strong consistency of the resulting estimator. An Expectation-Maximization algorithm is also developed in Section \ref{sec:MLE}, and its performance is examined in Section \ref{sec:simulation}. Section \ref{sec:app} illustrate the proposed EM algorithm using a data application. We conclude the paper with a discussion section.

\section{Background}
\label{sec:background}
The probability density function of a $d$-dimensional vMF distribution is given by:
\begin{eqnarray}
\label{vmf_dens}
 f(\mathbf{x};\boldsymbol{\mu}, \kappa) = c_{d}(\kappa) e^{\kappa \boldsymbol{\mu}^{T} \mathbf{x}} ,
\end{eqnarray}
where $x \in \mathbb{S}^{d-1}$ is a $d$-dimensional unit vector (i.e.  $ \mathbb{S}^{d-1} := \{ \mathbf{x} \in \mathbb{R}^d: ||\mathbf{x}|| = 1 \}$ where $||\cdot||$ is the $L_2$ norm), $\boldsymbol{\mu} \in \mathbb{S}^{d-1}$ is the mean direction, and $\kappa \ge 0$ is the concentration parameter. The normalizing constant $c_d(\kappa)$ has the form
$$ c_d(\kappa) = \frac{\kappa^{d/2 - 1}}{(2 \pi)^{d/2} I_{d/2-1}(\kappa)},$$
where $I_r(\cdot)$ is the modified Bessel function of the first kind of order $r$. The vMF distribution becomes increasingly concentrated at the mean direction $\mu$ as the concentration parameter $\kappa$ increases. The case $\kappa=0$ corresponds to the uniform distribution on $\mathbb{S}^{d-1}$.  
\\
\\
The probability density function of a mixture of vMF distributions with $p$ mixture components can be expressed as
\begin{eqnarray}
\label{vmf_mix_dens}
  g(\mathbf{x}; \{\pi_k, \boldsymbol{\mu}_k, \kappa_k\}_{k=1}^{p}) = \sum_{k=1}^{p} \pi_k f(\mathbf{x}; \boldsymbol{\mu}_k, \kappa_k) ,
\end{eqnarray}
where $(\pi_1, \ldots, \pi_p)$ is the mixing proportions, $(\boldsymbol{\mu}_k, \kappa_k)$ are the parameters for the $k$th component of the mixture, and $f(\cdot; \boldsymbol{\mu}_k, \kappa_k)$ is the vMF density function defined in \eqref{vmf_dens}. 
\\\\
We let $\Theta := \{ \theta \equiv (\boldsymbol{\mu}, \kappa): \boldsymbol{\mu} \in \mathbb{S}^{d-1}, \kappa \ge 0 \}$ be the parameter space of the vMF distribution, with the metric $\rho( \cdot, \cdot)$ defined as
\begin{eqnarray}
\label{metric_theta}
 \rho(\theta_1, \theta_2) = \mbox{arccos}(\boldsymbol{\mu}_1^{T} \boldsymbol{ \mu}_2) + |\kappa_1 - \kappa_2| ,
\end{eqnarray}
for $\theta_1 = (\boldsymbol{\mu}_1, \kappa_1), \theta_2 = (\boldsymbol{\mu}_2, \kappa_2)$. For any $\theta = (\boldsymbol{\mu}, \kappa) \in \Theta$, we write $f_{\theta}(\cdot) := f(\cdot; \boldsymbol{\mu}, \kappa)$ for the density function and $\gamma_{\theta}$ for the corresponding measure. The space of mixing probabilities is denoted by $\Pi := \{ (\pi_1, \ldots, \pi_p): \sum_{i=1}^{p} \pi_p = 1, \pi_k \ge 0, k = 1, \ldots, p \}$.  A $p$-component mixture of vMF distributions can be expressed as $\gamma = \sum_{k=1}^{p} \pi_k \gamma_{\theta_k}$ where $(\pi_1, \ldots, \pi_p) \in \Pi$ and $(\theta_1, \ldots, \theta_p) \in \Theta^{p}$, and where $\Theta^{p} = \Theta \times \cdots \times \Theta$ is the product of the parameter spaces. We define the product space $\Gamma := \Pi \times \Theta^{p}$, and we slightly abuse notations to let $\gamma$ denote both the mixing measure and the parameters in $\Gamma$. While $\Gamma$ is a natural parameterization of the family of mixture of vMF distributions, elements of $\Gamma$ are not identifiable. Thus, we let $\tilde{\Gamma}$ be the quotient topological space obtained from $\Gamma$ by identifying all parameters $(\pi_1, \ldots, \pi_p, \theta_1, \ldots, \theta_p)$ such that their corresponding densities are equal (almost) everywhere. For the rest of the paper, we assume that the number of mixture components $p$ is known.

\section{Likelihood Degeneracy}
\label{sec:likelihood}
We investigate the likelihood degeneracy problem of the vMF mixture model in this section. For any observations generated from a vMF mixture model with two or more mixture components, we show that the resulting likelihood function on the parameter space $\Gamma$ is unbounded above. As discussed in Section \ref{sec:intro}, likelihood degeneracy is a common problem for mixture models with location-scale distributions, including the normal mixtures. In the case of normal mixture distributions, one can show that by letting the mean parameter of a mixture component equal to one of the observations and letting the variance of the same mixture component converge to zero while holding other parameters fixed, the likelihood function diverges to positive infinity \citep{Chen2008}.
\\\\
For the vMF mixture distributions, the likelihood unboundedness can be best understood in the special case of $x \in \mathbb{S}^1$, or the mixture of von Mises distributions. The von Mises distribution, also known as the circular normal distribution, approaches a normal distribution with large concentration parameter $\kappa$:
$$ f(x|\mu, \kappa) \approx \frac{1}{\sigma \sqrt{2 \pi}} \exp \bigg[ \frac{-(x - \mu)^2}{2 \sigma^2} \bigg] ,$$
with $\sigma^2 = 1 / \kappa$, and the approximation converges uniformly as $\kappa$ goes to infinity. Therefore, the likelihood function of a mixture of von Mises distributions diverges to infinity by letting the mean parameter of a mixture component equal to one of the observations and letting the concentration parameter diverges to infinity. 
\\\\
We now consider the general case of the vMF mixture models. Let ${\cal X} = \{\mathbf{x}_1, \ldots, \mathbf{x}_n\}$ be the observations generated from a mixture of vMF distributions with density function $\sum_{k=1}^{p} \pi_k f_{\theta_k}(\cdot)$ where $\theta_k = (\boldsymbol{\mu}_k, \kappa_k)$. The likelihood function can be expressed as:  
$$ L({\cal X}; \boldsymbol{\theta}, \boldsymbol{\pi}) = \prod_{i=1}^{n} \sum_{k=1}^{p} \pi_k f(\mathbf{x}_i; \boldsymbol{\mu}_k, \kappa_k) ,$$ 
where $\boldsymbol{\theta} = (\theta_1, \ldots, \theta_p) = ((\boldsymbol{\mu}_1, \kappa_1), \ldots, (\boldsymbol{\mu}_p, \kappa_p))$ and $\boldsymbol{\pi} = (\pi_1, \ldots, \pi_p)$. We can show that by letting the mean direction $\boldsymbol{\mu}_k$ of one of the mixture components equals to an arbitrary observation and letting the corresponding concentration parameter $\kappa_k$ goes to infinity, the resulting likelihood function diverges.

\begin{theorem}
\label{thm:divg}
For any observations ${\cal X} = (\mathbf{x}_1, \ldots, \mathbf{x}_n)$, there exists a sequence $(\boldsymbol{\theta}^{(q)}, \boldsymbol{\pi}^{(q)})$, $q=1,2,\ldots$ such that $L({\cal X}; \boldsymbol{\theta}^{(q)}, \boldsymbol{\pi}^{(q)}) \uparrow \infty$ as $q \rightarrow \infty$.
\end{theorem}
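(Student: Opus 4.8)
The plan is to construct the diverging sequence explicitly by exploiting the behavior of the normalizing constant $c_d(\kappa)$ as $\kappa \to \infty$.

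The plan is to build the sequence explicitly by forcing a single mixture component to concentrate all its mass at one observation while keeping the remaining components fixed and diffuse. Concretely, I would take component $1$, set its mean direction to $\mu_1^{(q)} = x_1$ and let its concentration parameter $\kappa_1^{(q)} = q \to \infty$, while assigning it a fixed positive weight such as $\pi_1^{(q)} = 1/2$. For the remaining $p-1$ components (here $p \ge 2$ by assumption) I would set $\kappa_k^{(q)} = 0$, so that each contributes the uniform density $c_d(0) = \Gamma(d/2)/(2\pi^{d/2}) > 0$ on $\mathbb{S}^{d-1}$, with weights $\pi_k^{(q)} = \tfrac{1}{2(p-1)}$. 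The point of this choice is that the first factor of the likelihood will blow up at $x_1$, while every other factor stays bounded below by a strictly positive constant independent of $q$.

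The crux of the argument is the divergence of the peak value of the vMF density. Evaluating \eqref{vmf_dens} at its own mean direction gives $f(x_1; x_1, \kappa) = c_d(\kappa)\, e^{\kappa}$, since $x_1^{T} x_1 = 1$. To control this I would invoke the classical large-argument asymptotics of the modified Bessel function, $I_{r}(\kappa) \sim e^{\kappa}/\sqrt{2\pi\kappa}$ as $\kappa \to \infty$. Substituting $r = d/2 - 1$ into the definition of $c_d(\kappa)$ then yields
\begin{eqnarray}
c_d(\kappa)\, e^{\kappa} \;\sim\; \frac{\kappa^{(d-1)/2}}{(2\pi)^{(d-1)/2}} \;\longrightarrow\; \infty ,
\end{eqnarray}
because the exponent $(d-1)/2$ is strictly positive for $d \ge 2$. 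Thus the height of the density at its mode grows without bound as the concentration increases, which is the mechanism driving the degeneracy.

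With this in hand the likelihood bound is routine. For the term indexed by $x_1$ I would discard all but the first summand to obtain $\sum_k \pi_k^{(q)} f(x_1; \mu_k^{(q)}, \kappa_k^{(q)}) \ge \tfrac{1}{2}\, c_d(q)\, e^{q}$, which diverges. For each term $x_i$ with $i \ge 2$ I would instead discard the first summand and keep the diffuse ones, giving the uniform lower bound $\tfrac{1}{2}\, c_d(0) > 0$ regardless of $q$. Multiplying these over $i = 1, \ldots, n$ produces
\begin{eqnarray}
L(\mathcal{X}; \boldsymbol{\theta}^{(q)}, \boldsymbol{\pi}^{(q)}) \;\ge\; \Big(\tfrac{1}{2}\, c_d(q)\, e^{q}\Big)\Big(\tfrac{1}{2}\, c_d(0)\Big)^{n-1} \;\longrightarrow\; \infty ,
\end{eqnarray}
establishing the claim, and the monotone increase can be arranged by passing to a subsequence in $q$ if needed.

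The only genuinely delicate step is the Bessel asymptotic estimate; everything else is bookkeeping. I would either cite a standard reference for $I_r(\kappa) \sim e^{\kappa}/\sqrt{2\pi\kappa}$ or, if a self-contained argument is preferred, derive the weaker statement $c_d(\kappa) e^{\kappa} \to \infty$ directly, which is all the proof actually requires. It is also worth noting where the hypothesis $p \ge 2$ is used: with a single component the non-peaked factors $f(x_i; x_1, \kappa)$ for $i \ge 2$ would themselves tend to zero, so the presence of at least one diffuse companion component is essential to keep those factors bounded away from zero.
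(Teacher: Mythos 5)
Your proposal is correct, and the underlying mechanism is the same as the paper's: set one component's mean equal to an observation, let its concentration $\kappa \to \infty$, control the mode height via the Bessel asymptotics $I_r(\kappa) \sim e^{\kappa}/\sqrt{2\pi\kappa}$ (so that $c_d(\kappa)e^{\kappa} \asymp \kappa^{(d-1)/2} \to \infty$), and use the remaining components to keep the other $n-1$ likelihood factors bounded below. The genuine difference is in the handling of the mixing weights, and it works in your favor. The paper perturbs an arbitrary fixed $\boldsymbol{\pi}$ via $\pi_k^{(q)} = (1-1/q)\pi_k + 1/(qp)$ and then retains only the crude bound $\pi_k^{(q)} \ge 1/(qp)$, giving the prefactor $(qp)^{-n}$ in its displayed lower bound; since $c_d(q)e^{q}$ grows only polynomially of order $q^{(d-1)/2}$, that lower bound behaves like $q^{(d-1)/2 - n}$ and in fact tends to zero whenever $n > (d-1)/2$, so the paper's final step "$L \uparrow \infty$" does not follow from the bound as written (it is repairable, e.g.\ by noting $\pi_k^{(q)} \to \pi_k$ and bounding the relevant weights below by constants). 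Your construction sidesteps this entirely: the weights $\pi_1^{(q)} = 1/2$ and $\pi_k^{(q)} = 1/(2(p-1))$ are fixed, the diffuse components with $\kappa_k = 0$ give each factor $i \ge 2$ the constant lower bound $\tfrac12 c_d(0) > 0$, and your bound $\bigl(\tfrac12 c_d(q)e^{q}\bigr)\bigl(\tfrac12 c_d(0)\bigr)^{n-1} \to \infty$ is airtight. Two smaller points are also handled well: the subsequence remark legitimately converts divergence into the monotone increase $L \uparrow \infty$ demanded by the statement, and your observation that $p \ge 2$ is essential (a lone peaked component would drive the factors at $x_i$, $i \ge 2$, to zero) matches the paper's standing assumption in Section 3 of two or more components, which the paper's own proof uses implicitly when it picks $h \ne l$.
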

The proof of Theorem~\ref{thm:divg} is provided in the Appendix. The unboundedness of the likelihood function on the parameter space implies that the maximum likelihood estimator is not well defined. 

\section{Penalized Maximum Likelihood Estimation}
\label{sec:MLE}
\subsection{Preliminary}
Let $\gamma_0 \in \Gamma$ be the true mixing measure for the mixture of vMF distributions with corresponding density function $f_0$ on $\mathbb{S}^{d-1}$. We let $M$ be the maximum of the true density $f_0$:
\begin{eqnarray}
\label{M_def}
M := \max_{\mathbf{x} \in \mathbb{S}^{d-1}} f_0(\mathbf{x}) ,
\end{eqnarray}
and define the metric $d(\mathbf{x}, \mathbf{y}) = \arccos(\mathbf{x}^{T} \mathbf{y})$ on $\mathbb{S}^{d-1}$ as the angle between two unit vectors $\mathbf{x}, \mathbf{y} \in \mathbb{S}^{d-1}$.
For any fixed $\mathbf{x} \in \mathbb{S}^{d-1}$ and positive number $\epsilon$, the $\epsilon$-ball in $\mathbb{S}^{d-1}$ centered at $\mathbf{x}$ is defined as $B_{\epsilon}(\mathbf{x}) = \{\mathbf{y} \in \mathbb{S}^{d-1}: d(\mathbf{x}, \mathbf{y}) < \epsilon\}$. For any measurable set $B \subset \mathbb{S}^{d-1}$, the spherical measure of $B$ is given by $ \omega(B) := \int_{B} d \omega $, where $d \omega$ is the standard surface measure on $\mathbb{S}^{d-1}$. 
\\\\
For any $\mathbf{x} \in \mathbb{S}^{d-1}$ and small positive number $\epsilon$, the measure of the ball $B_{2\epsilon}(\mathbf{x})$ in $\mathbb{S}^{d-1}$ is given by \citep{Li2011}
\begin{eqnarray}
\omega(B_{2 \epsilon}(\mathbf{x})) &=& \frac{2 \pi^{(d-1)/2}}{\Gamma(\frac{d-1}{2})} \int_{0}^{2\epsilon} \sin^{d-2}(\theta) d\theta \nonumber \\
&\le& 2^{d-1} \frac{2 \pi^{(d-1)/2}}{\Gamma(\frac{d-1}{2})} \epsilon^{d-1}  \\
&=& A_2 \epsilon^{d-1},
\end{eqnarray}
where 
\begin{eqnarray}
\label{A2_def}
 A_2 = 2^{d-1} \frac{2 \pi^{(d-1)/2}}{\Gamma(\frac{d-1}{2})} .
\end{eqnarray}
We define the function $\delta(\cdot)$ by 
\begin{eqnarray}
\label{delta_def}
\delta(\epsilon) := M A_2 \epsilon^{d-1} ,
\end{eqnarray} 
where the constants $M$ and $A_2$ are defined in Equation \eqref{M_def} and \eqref{A2_def}, respectively. The function $\delta(\cdot)$ plays a crucial role in Lemmas \ref{key_lemma_1} and \ref{key_lemma_2}. Lemmas \ref{key_lemma_1} and \ref{key_lemma_2} are analogous to Lemmas 1 and 2 in \cite{Chen2008}. They provide (almost sure) upper bounds on the number of observations in a small $\epsilon$-ball in $\mathbb{S}^{d-1}$. The upper bound in Lemma \ref{key_lemma_1} is for each fixed $\epsilon$ in an interval whereas the upper bound in Lemma \ref{key_lemma_2} holds uniformly for all $\epsilon$ in the same interval. The proof of Lemma \ref{key_lemma_1} is given in the Appendix. The proof of Lemma \ref{key_lemma_2} is similar to the proof of Lemma 2 in \cite{Chen2008} and is omitted. Lemmas \ref{key_lemma_1} and \ref{key_lemma_2} are crucial to ensure consistency of the penalized maximum likelihood estimator. 
\\\\
We note that Lemmas \ref{key_lemma_1} and \ref{key_lemma_2} may be generalized by relaxing the assumption that the true density is a mixture of vMF densities. This is possible because the vMF assumption does not play a crucial role. Such a generalization has been obtained for normal mixtures \citep[Lemma 3.2]{Chen2017}. However, this is not required for the proof of our main result.

\begin{lemma}
\label{key_lemma_1}
For any sufficiently small positive number $\xi_0$, as $n \rightarrow \infty$, and for each fixed $\epsilon$ such that $$\frac{\log n}{M n A_2} \le \epsilon^{d-1} < \xi_0 ,$$ the following inequalities hold except for a zero probability event:
\begin{eqnarray}
\label{ineq1}
\sup_{\boldsymbol{\mu} \in S^{d-1}} \bigg\{ \frac{1}{n} \sum_{i=1}^{n} I\big(X_i \in B_{\epsilon}(\boldsymbol{\mu})\big)  \bigg\} \le 2 \delta(\epsilon).
\end{eqnarray}  
Uniformly for all $\epsilon$ such that $$ 0 < \epsilon^{d-1} < \frac{\log n}{M n A_2}, $$ the following inequalities hold except for a zero probability event:
\begin{eqnarray}
\label{ineq2}
 \sup_{\boldsymbol{\mu} \in S^{d-1}} \bigg\{ \frac{1}{n} \sum_{i=1}^{n} I\big(X_i \in B_{\epsilon}(\boldsymbol{\mu})\big)  \bigg\} \le 2 \frac{(\log n)^2}{n} .
\end{eqnarray}
\end{lemma}

\begin{lemma}
\label{key_lemma_2}
For any sufficiently small positive number $\xi_0$, as $n \rightarrow \infty$, uniformly for all $\epsilon$ such that
$$\frac{\log n}{M n A_2} \le \epsilon^{d-1} < \xi_0 ,$$
the following inequality holds except for a zero probability event:
\begin{eqnarray}
\label{ineq1}
\sup_{\boldsymbol{\mu} \in S^{d-1}} \bigg\{ \frac{1}{n} \sum_{i=1}^{n} I\big(X_i \in B_{\epsilon}(\boldsymbol{\mu})\big)  \bigg\} \le 4 \delta(\epsilon).
\end{eqnarray}  
Uniformly for all $\epsilon$ such that $$ 0 < \epsilon^{d-1} < \frac{\log n}{M n A_2}, $$ the following inequalities hold except for a zero probability event:
\begin{eqnarray}
\label{ineq2}
 \sup_{\boldsymbol{\mu} \in S^{d-1}} \bigg\{ \frac{1}{n} \sum_{i=1}^{n} I\big(X_i \in B_{\epsilon}(\boldsymbol{\mu})\big)  \bigg\} \le 2 \frac{(\log n)^2}{n} .
\end{eqnarray}
\end{lemma}

\subsection{Penalized Maximum Likelihood Estimator}
For any mixing measure of a $p$-component mixture $\gamma = \sum_{l=1}^{p} \pi_l \gamma_{\theta_l} $ in $\Gamma$, and $n$ i.i.d. observations ${\cal X}$, the penalized log-likelihood function is defined as
\begin{eqnarray}
\label{pen_lik}
 pl_n(\gamma) = l_n(\gamma) + p_n(\boldsymbol{\kappa}) 
\end{eqnarray}
where $l_n(\gamma)$ is the log-likelihood function:
$$ l_n(\gamma) = \sum_{i=1}^{n} \log \bigg\{ \sum_{k=1}^{p} \pi_k f(\mathbf{x}_i; \boldsymbol{\mu}_k, \kappa_k) \bigg\}, $$
and $p_n(\cdot)$ is a penalty function that depends on $\boldsymbol{\kappa} = (\kappa_1, \ldots, \kappa_p)$. Note that we slightly abuse notations and let $p_n(\cdot)$ denotes the penalty function and $p$ denotes the number of mixture components. We impose the following conditions on the penalty function $p_n(\cdot)$.
\begin{enumerate}
    \item[C1] $p_n(\boldsymbol{\kappa}) = \sum_{l=1}^{p} \tilde{p}_n(\kappa_l)$,
    \item[C2] For $l=1,\ldots, p$, $\sup_{\kappa_l > 0} \max \{0, \tilde{p}_n(\kappa_l) \} = o(n)$ and $\tilde{p}_n(\kappa_l) = o(n)$ for each fixed $\kappa_l \ge 0$,
    \item[C3] For $l=1, \ldots, p$, and for $$ 0 < \frac{1}{\log (\kappa_l)^{2d - 2}} \le \frac{\log n}{M n A_2} ,$$ $\tilde{p}_n(\kappa_l) \le -3 (\log n)^{2} \log \kappa_l$ for large enough $n$.
\end{enumerate}
Conditions C1 - C3 on the penalty function are analogous to the three conditions proposed in \citep{Chen2008}. Condition C1 assumes that the penalty function is of additive form. Condition C2 ensures that the penalty is not overly strong while condition C3 allows the penalty to be severe when the concentration parameter is very large. Recall the true mixing measure $\gamma_0 \in \Gamma$, and let $\hat{\gamma}$ denote the maximizer of the penalized log-likelihood function defined in Equation \eqref{pen_lik}. We have the following main result of this paper demonstrating that the maximizer of the penalized log-likelihood function is strongly consistency.

\begin{theorem}
Let $\hat{\gamma}_n$ be the maximizer of the penalized log-likelihood $pl_n(\gamma)$, then $\hat{\gamma}_n \rightarrow \gamma_0$ almost surely in the quotient topological space $\tilde{\Gamma}$.
\end{theorem}

\subsection{EM Algorithm}
\label{subsec:EM}
We develop an Expectation-Maximization algorithm to maximize the penalized log-likelihood function defined in Equation \eqref{pen_lik}. By condition C1, the penalty function is assumed to have the form $p_n(\boldsymbol{\kappa}) = \sum_{l=1}^{p} \tilde{p}_n(\kappa_l)$. We consider $\tilde{p}_n(\kappa_l)$ to have the form $\tilde{p}_n(\kappa_l) = - \psi_n \kappa_l$ for all $l$ where the constant $\psi_n \propto n^{-1}$ that depends on the sample size $n$. In particular, we may set $\psi_n = \zeta / n$ for some constant $\zeta > 0$ or $\psi_n = S_x / n$ where $S_x$ is the sample circular variance.
\\\\
The resulting penalty function clearly satisfies condition C2. We note that condition C3 is also satisfied since for  $$ 0 < \frac{1}{\log (\kappa_l)^{2d - 2}} \le \frac{\log n}{M n A_2} ,$$ we have $$ \kappa_l \approx \exp \big( (n / \log n)^{1/(2d-2)} \big) .$$
The EM algorithm developed in \citep{Banerjee2005} can be easily modified to incorporate an additional penalty function. The E-Step of the penalized EM involves computing the conditional probabilities:
\begin{eqnarray}
\label{estep}
 p(Z_i = h|\mathbf{x}_i, \boldsymbol{\theta}) = \frac{\pi_h f(\mathbf{x}_i; \theta_h)}{ \sum_{l=1}^{p} \pi_l f(\mathbf{x}_i;\theta_l) }, \quad h=1,\ldots, p,
\end{eqnarray}
where $Z_i$ is the latent variable denoting the cluster membership of the $i$th observation. For the M-step, using the method of Lagrange multipliers, we optimize the full conditional penalized log-likelihood function below  
$$ \sum_{l=1}^{p} \bigg[ \sum_{i=1}^{n} (\log( \pi_l ) + \log(c_d(\kappa_l)) ) p(Z_i=l| \mathbf{x}_i, \boldsymbol{\theta}) + \sum_{i=1}^{n} \kappa_l \mu_l^{T} x_i p(Z_i=l| \mathbf{x}_i, \boldsymbol{\theta}) - \psi_n \kappa_l  + \lambda_l (1 - \boldsymbol{\mu}_l^{T} \boldsymbol{\mu}_l ) \bigg] $$ 
with respect to $\boldsymbol{\mu}_h, \kappa_h, \pi_h$ for $h=1,\ldots,p$, which gives:
\begin{eqnarray}
 \hat{\pi}_h &=& \frac{1}{n} \sum_{i=1}^{N} p(Z_i = h|\mathbf{x}_i, \boldsymbol{\theta}) \\
  \hat{\boldsymbol{\mu}}_h &=& \frac{ r_h }{ ||r_h||} \\
  \label{mstep_kappa}
 \frac{I_{d/2}(\hat{\kappa}_h)}{I_{d/2-1}(\hat{\kappa}_h)} &=& \frac{- \psi_n + ||r_h||}{\sum_{i=1}^{N} p(Z_i=h|\mathbf{x}_i, \boldsymbol{\theta})} 
\end{eqnarray}
where $ r_h =  \sum_{i=1}^{n} \mathbf{x}_i p(Z_i = h|\mathbf{x}_i, \boldsymbol{\theta})$. We note that the assumption on $\psi_n$ implies that $-\psi_n + ||r_h|| \ge 0$ almost surely as $n \rightarrow \infty$. However, for a finite sample size, there is a non-zero possibility that
$ -\psi_n + ||r_h|| < 0$, and the updating equation for $\kappa_h$ is not well defined since the left hand side of Equation \eqref{mstep_kappa} is non-negative. However, the left hand side of Equation \eqref{mstep_kappa} is a strictly monotonically increasing function from $[0, \infty)$ to $[0, 1)$ \citep{Schou1978, Hornik2014}, and in particular $\hat{\kappa}_h = 0$ whenever  
$$\frac{I_{d/2}(\hat{\kappa}_h)}{I_{d/2-1}(\hat{\kappa}_h)} = 0. $$
Thus, we can simply set $\kappa_h = 0$ whenever $ -\psi_n + ||r_h|| < 0$. To solve Equation \eqref{mstep_kappa} for $\hat{\kappa}_h$, various approximations have been proposed \citep{Banerjee2005, Tanabe2007, Song2012}. Section 2.2 of \cite{Hornik2014} contains a detailed review of available approximations. We consider the approximation used in \cite{Banerjee2005}:
$$ \hat{\kappa}_h \approx \frac{\rho_h (d - \rho_h^2)}{ 1 - \rho_h^{2}} ,$$
with $$ \rho_h = \frac{- \psi_n + ||r_h||}{\sum_{i=1}^{N} p(Z_i=h|\mathbf{x}_i, \boldsymbol{\theta})} .$$
We initialize the EM algorithm by randomly assigning the observations into mixture components, and the algorithm is terminated if the change in the penalized log-likelihood falls below a small threshold which is set at $10^{-5}$ in the experiements.

\section{Simulation Studies}
\label{sec:simulation}
We perform simulation studies to investigate the performance of the proposed EM algorithm for maximizing the penalized likelihood function. We generate data from the mixture of vMF distributions with two and three mixture components and with dimensions $d=2,3,4$. For each model, data are generated with increasing samples sizes to assess the convergence of the estimated parameters toward the true parameters. The concentration parameters $\boldsymbol{\kappa}$ and the mixing proportions $\boldsymbol{\pi}$ are pre-specficied whereas the mean directions $\boldsymbol{\mu}$ are drawn from the uniform distribution on the surface of the unit hypersphere. 
\\\\
For the two mixture components model, we specify the mixing proportions as $\boldsymbol{\pi} = (0.5, 0.5)$ and the concentration parameters $\boldsymbol{\kappa} = (10, 1)$. For the model with three mixture components, we set $\boldsymbol{\pi} = (0.4, 0.3, 0.3)$ and $\boldsymbol{\kappa} = (10, 5, 1)$. For illustrative purpose, we consider the penalty function $\tilde{p}_n(\kappa_l) = - (1/n) \kappa_l$. For each combination of dimension $d$ and sample size $n$, we simulate 500 random samples from the model and the EM algorithm developed in Section \ref{subsec:EM} is used to obtain the parameter estimates. We measure the distance between the estimated parameters and the true parameters for each random sample. For the mean direction parameters $\boldsymbol{\mu}$, the distance is measured using the metric $d(\mathbf{x}, \mathbf{y}) = \mbox{arccos}(\mathbf{x}^T \mathbf{y})$.
\\\\
Simulation results for the two and three mixture cases are presented in Table \ref{conv_ana_1} and \ref{conv_ana_2}, respectively. The average distance and the standard deviation between the true and the estimated parameters from 500 replications are reported. We observe that the estimated parameters converge to the true parameter as $n$ increases. We notice that the mean direction parameter can be estimated with higher precision when the corresponding concentration parameter is large. This is expected since observations are more closely clustered with a large concentration parameter. 
\\\\
Table \ref{degeneracy1} and \ref{degeneracy2} show the number of degeneracies when running the EM algorithm for computing the ordinary MLE for mixture of vMF distributions. Observations are generated from mixture of vMF distributions with one mixture component for Table \ref{degeneracy1} and with two mixture components for Table \ref{degeneracy2}. We vary the dimension of the data from $d=3$ to $d=4$ and the sample size from $n=100$ to $n=500$. Mixtures of vMF distributions with $p=2,3,4,5$ components are fitted to the generated data. We compute the ordinary MLE using the EM algorithm and record the number of times that the EM fails to converge from 1000 simulation runs. The EM algorithm is considered fail to converge if one of the concentration parameters becomes exceedingly large (greater than $10^{10}$). From Table \ref{degeneracy1} and \ref{degeneracy2}, the EM algorithm tends to fail to converge with smaller sample sizes. We also note that when the fitted model has a larger number of mixture components $p$, the EM algorithm is more likely to fail to converge.

\begin{table}[hbtp]
\caption{Simulation results for the vMF mixtures with two mixture components. }
\begin{center}
\begin{tabular}{ccccccc}
   \hline
$d$ & $n$ & $\pi_1 (=0.5)$ & $\mu_1$ & $\mu_2$ & $\kappa_1(=10)$ & $\kappa_2(=1)$   \\ \hline\hline
\multirow{2}{*}{2} & \multirow{2}{*}{100} & 0.047 & 0.035 & 0.152 & 2.488 & 0.207 \\
 & & (0.050) & (0.023) & (0.124) & (2.339) & (0.159) \\ 
\\
\multirow{2}{*}{2} & \multirow{2}{*}{500} & 0.026 & 0.016 & 0.071 &1.594 &0.081 \\
& & (0.024) & (0.010) & (0.062) & (1.181) & (0.064) \\
\\
\multirow{2}{*}{2} & \multirow{2}{*}{1000} & 0.022 & 0.010  & 0.046 & 1.410 & 0.078 \\
& & (0.019) & (0.007) & (0.034) &  (1.037) & (0.075) \\ 
\\
\multirow{2}{*}{3} & \multirow{2}{*}{100} & 0.037 &0.048 & 0.275 &2.175  &0.171 \\
&  & (0.034) & (0.026) & (0.154) & (1.712) & (0.143) \\
\\
\multirow{2}{*}{3} & \multirow{2}{*}{500} & 0.025&0.023  &0.126 &1.345  &0.098  \\
& &  (0.024) & (0.013) & (0.067) & (0.894) & (0.087) \\
\\
\multirow{2}{*}{3} & \multirow{2}{*}{1000} & 0.022 &0.018  &0.085 &1.299  & 0.068 \\
& & (0.017) & (0.009) & (0.047) & (0.680) & (0.058)\\
\\
\multirow{2}{*}{4} & \multirow{2}{*}{100} & 0.039  &0.075  &0.324  &1.623  & 0.194  \\
& & (0.033) & (0.032) & (0.229) & (1.406) & (0.122) \\
\\
\multirow{2}{*}{4} & \multirow{2}{*}{500} &0.019  &0.024 &0.161 &0.868 & 0.103 \\
&& (0.017) & (0.013) & (0.065) & (0.518) & (0.083) \\
\\
\multirow{2}{*}{4} & \multirow{2}{*}{1000} & 0.018  & 0.020  &0.142  &0.842  &0.060  \\
& & (0.011) &  (0.011) & (0.052) & (0.431) & (0.051) \\
 \hline\hline

\end{tabular}
\end{center}
\label{conv_ana_1}
\end{table}

\begin{table}[hbtp]
\caption{Simulation results for the vMF mixtures with three mixture components}
\begin{center}
\begin{tabular}{cccccccccc}
   \hline
$d$ & $n$ & $\pi_1 (=0.4))$ & $\pi_2 (=0.3)$ & $\mu_1$ & $\mu_2$ & $\mu_3$ & $\kappa_1(=10)$ & $\kappa_2(=5)$ & $\kappa_3 (=1)$   \\ \hline\hline
\multirow{2}{*}{2} & \multirow{2}{*}{100} & 0.071 & 0.039 & 0.046 & 0.085 & 0.327 & 2.828 & 2.016 & 0.293 \\
 & & (0.042) & (0.026) & (0.050) & (0.091) & (0.279) & (2.571) & (1.289) & (0.302) \\
\\
\multirow{2}{*}{2} & \multirow{2}{*}{500} &0.058 & 0.028 & 0.039 & 0.062 & 0.209 & 1.703 & 1.514 & 0.255 \\
 & & (0.501) & (0.023) & (0.044) & (0.061) & (0.154) & (1.616) & (1.176) & (0.202) \\
\\
\multirow{2}{*}{2} & \multirow{2}{*}{1000} & 0.046 & 0.025 & 0.022 & 0.040 & 0.167 & 1.431 & 1.318 & 0.209 \\
 & & (0.032) & (0.185) & (0.036) & (0.047) & (0.125) & (1.307) & (0.892) & (0.185) \\
\\
\multirow{2}{*}{3} & \multirow{2}{*}{100} & 0.037 & 0.041 & 0.053 & 0.113 & 0.452 & 1.717 & 1.720 & 0.249 \\
&& (0.034) & (0.044) & (0.028) & (0.096) & (0.258) & (1.224) & (1.050) & (0.274) \\
\\
\multirow{2}{*}{3} & \multirow{2}{*}{500} & 0.033 & 0.031 & 0.043 & 0.067 & 0.285 & 1.120 & 1.018 & 0.206 \\
&& (0.024) & (0.023) & (0.037) & (0.040) & (0.138) & (1.010) & (0.914) & (0.246) \\
\\
\multirow{2}{*}{3} & \multirow{2}{*}{1000} & 0.026 & 0.022 & 0.024 & 0.052 & 0.255 & 1.051 & 1.039 & 0.183 \\
&& (0.026) & (0.021) & (0.018) & (0.029) & (0.126) & (0.806) & (0.747) & (0.138) \\
\\
\multirow{2}{*}{4} & \multirow{2}{*}{100} & 0.051 & 0.021 & 0.073 & 0.121 & 0.417 & 1.432 & 1.356 & 0.334 \\
&& (0.045) & (0.017) & (0.026) & (0.058) & (0.267) & (1.207) & (1.110) & (0.260) \\
\\
\multirow{2}{*}{4} & \multirow{2}{*}{500} & 0.030 & 0.022 & 0.031 & 0.068 & 0.313 & 1.154 & 1.088 & 0.246 \\
&& (0.026) & (0.016) & (0.016) & (0.028) & (0.018) & (0.873) & (0.760) & (0.209) \\  
\\
\multirow{2}{*}{4} & \multirow{2}{*}{1000} & 0.033 & 0.021 & 0.028 & 0.059 & 0.277 & 1.100 & 1.072 & 0.227 \\
&& (0.027) & (0.017) & (0.015) & (0.029) & (0.163) & (0.675) & (0.736) & (0.180) \\
\\
 \hline\hline

\end{tabular}
\end{center}
\label{conv_ana_2}
\end{table}

\begin{table}[hbtp]
\caption{Number of degeneracies of the EM algorithm when computing the ordinary MLE for mixture of vMF distributions with one mixture component}
\begin{center}
\begin{tabular}{cccccc}
   \hline
$d$ & $n$ & $p=2$ & $p=3$ & $p=4$ & $p=5$   \\ \hline\hline
3 & 100  & 13 & 47 &86 & 218 \\
3 & 200 & 2 &8 &34 & 53 \\
3 & 500 &0 &2 &2 & 5 \\
4 & 100  & 6 & 33 & 64 & 169 \\
4 & 200 & 1& 3& 12 & 39 \\
4 & 500 & 0& 0& 1&4 \\
 \hline\hline
\end{tabular}
\end{center}
\label{degeneracy1}
\end{table}

\begin{table}[hbtp]
\caption{Number of degeneracies of the EM algorithm when computing the ordinary MLE for mixture of vMF distributions with two mixture components}
\begin{center}
\begin{tabular}{cccccc}
   \hline
$d$ & $n$ & $p=2$ & $p=3$ & $p=4$ & $p=5$   \\ \hline\hline
3 & 100  & 0 & 51 &147 & 233  \\
3 & 200 &  0& 27 &45 & 87  \\
3 & 500 & 0& 4 & 11& 23 \\
4 & 100  &0 & 49 & 113 & 193 \\
4 & 200 & 0& 14 & 47 & 81  \\
4 & 500 & 0& 2& 8& 11 \\
 \hline\hline
\end{tabular}
\end{center}
\label{degeneracy2}
\end{table}

\section{Data Application}
\label{sec:app}
We illustrate the EM algorithm for maximum penalized log-likelihood using the \emph{household} data set from \textbf{R} package \textbf{HSAUR3}. The data set contains the household expenditures of 20 single men and 20 single women on four commodity group. As in  \cite{Hornik2014}, we will also focus on the three commodity groups (housing, food and service). The EM algorithms for ordinary MLE and for the penalized MLE with 2 and 3 mixture components are fitted to the data. The results are shown in Table \ref{housing_mle} and \ref{housing_pmle}, respectively, where the estimated parameters $\hat{\boldsymbol{\pi}}, \hat{\boldsymbol{\mu}}, \hat{\boldsymbol{\kappa}}$ are shown for all cases. The estimated paramters for the MLE and for the penalized MLE are very similar for both $p=2$ and $p=3$. The log-likelihood evaluated at the MLE is slightly larger than the penalized log-likelihood evaluated at the penalized MLE. More interestingly, we observe that for each case the largest concentration parameter obtained under the penalized MLE is smaller than that obtained under the MLE. This behavior suggests that the incorporation of a penalty function pulls back the estimate of largest concentration parameter towards 0 and prevents the divergence of the likelihood function. 

\begin{table}[hbtp]
\caption{Maximum likelihood estimates obtained from fitting mixtures of vMF distributions to the household expenses example.}
\begin{center}
\begin{tabular}{ccccccc}
   \hline
 & $\pi$ & housing & food & service & $\kappa$  & log likelihood   \\ \hline\hline
\multirow[t]{2}{*}{$p = 2$} & 0.47  & 0.95 & 0.13 & 0.27 & 114.70 & 113.08  \\
 & 0.53 &  0.67 & 0.63 & 0.40 & 17.96  & \\
\multirow[t]{3}{*}{$p = 3$} & 0.52 & 0.95 & 0.15 & 0.27 & 83.26 & 126.06 \\
 & 0.13 &0.67 & 0.31 & 0.68 & 181.21 & \\
 & 0.35 & 0.59& 0.76 & 0.28 & 62.91  & \\
 \hline\hline
\end{tabular}
\end{center}
\label{housing_mle}
\end{table}

\begin{table}[hbtp]
\caption{Penalized maximum likelihood estimates obtained from fitting mixtures of vMF distributions to the household expenses example.}
\begin{center}
\begin{tabular}{ccccccc}
   \hline
 & $\pi$ & housing & food & service & $\kappa$  & penalized log likelihood  \\ \hline\hline
\multirow[t]{2}{*}{$p = 2$} & 0.47  & 0.95 & 0.13 & 0.27 & 112.20 & 112.94 \\
 & 0.53 &  0.67 & 0.63 & 0.40 & 18.48  & \\
\multirow[t]{3}{*}{$p = 3$} & 0.52 & 0.95 & 0.15 & 0.27 & 82.97 & 125.26\\
 & 0.13 &0.67 & 0.31 & 0.68 & 165.71 & \\
 & 0.35 & 0.59& 0.76 & 0.28 & 62.69  & \\
 \hline\hline
\end{tabular}
\end{center}
\label{housing_pmle}
\end{table}

\section{Discussion}

In this paper we considered a penalized maximum likelihood approach to the estimation of the mixture of vMF distributions. By incorporating a suitable penalty function, we showed that the resulting penalized MLE is strongly consistent. An EM algorithm was derived to maximize the penalized likelihood function, and its performance and behavior were examined using simulation studies and a data application. The techniques used in this work to prove consistency could be applicable to study other mixture models for spherical observations.

\section{Conflict of Interests}
On behalf of all authors, the corresponding author states that there is no conflict of interest.

\appendix
\section{Proofs}

\normalsize
\subsection{Proof of Theorem \ref{thm:divg}}
\begin{proof}

We fix $\boldsymbol{\theta} := ((\boldsymbol{\mu}_1, \kappa_1), \ldots, (\boldsymbol{\mu}_p, \kappa_p)) \in \Theta^{p}$ such that $\boldsymbol{\mu}_l = \mathbf{x}_m$ for some pair $(l, m)$. We construct a sequence $(\boldsymbol{\theta}^{(q)}, \boldsymbol{\pi}^{(q)}), q=1,2,\ldots$ and show that $L({\cal X}; \boldsymbol{\theta}^{(q)}, \boldsymbol{\pi}^{(q)}) \uparrow \infty$ as $q \uparrow \infty$.
\\\\
For $k=1,2,\ldots,p$ and $q=1,2,\ldots$, we let $\boldsymbol{\mu}^{(q)}_{k} = \boldsymbol{\mu}_k$, and $\pi^{(q)}_{k} = (1 - 1/q) \pi_k + 1/(q p)$. It is easy to verify that $\sum_{k=1}^{p} \pi^{(q)}_k = 1$. For each $q$, we let $\kappa_l^{(q)} = q$, and for $k \ne l$, we let $\kappa_k^{(q)} = \kappa_k$. Since $\pi_k^{(q)} \ge 1 / (q p)$, the likelihood is lower bounded by:
$$ L({\cal X}; \boldsymbol{\theta}^{(q)}, \boldsymbol{\pi}^{(q)}) \ge \frac{1}{(q p)^{n}} \prod_{i=1}^{n} \sum_{k=1}^{p} f(\mathbf{x}_i; \boldsymbol{\mu}_k, \kappa_k^{(q)}) . $$
For the $m$th observation, we have
$$ \sum_{k=1}^{p} f(\mathbf{x}_m; \boldsymbol{\mu}_k, \kappa_k^{(q)}) \ge f(\mathbf{x}_m; \boldsymbol{\mu}_l, \kappa_l^{(q)}) = c_d(\kappa_l^{(q)}) e^{\kappa_l^{(q)}} .$$
For any $i \ne m$, and $h \ne l$, we have
$$ \sum_{k=1}^{p} f(\mathbf{x}_i; \mu_k, \kappa_k^{(q)}) \ge f(\mathbf{x}_i; \boldsymbol{\mu}_h, \kappa_h^{(q)}) = c_d(\kappa_h^{(q)}) e^{\kappa_h^{(q)} \boldsymbol{\mu}_h^{T} \mathbf{x}_i} .$$
Therefore, the likelihood function can be lower bounded by
\begin{eqnarray*}
 L({\cal X}; \boldsymbol{\theta}^{(q)}, \boldsymbol{\pi}^{(q)}) &\ge& \frac{1}{(qp)^{n}} c_d(\kappa_l^{(q)}) e^{\kappa_l^{(q)}} \prod_{i \ne n} c_d(\kappa_h^{(q)}) e^{\kappa_h^{(q)} \boldsymbol{\mu}_h^{T} \mathbf{x}_i} \\
& = & \frac{1}{(qp)^{n}} c_d(q) e^{q} \prod_{i \ne n} c_d(\kappa_h) e^{\kappa_h \boldsymbol{\mu}_h^{T} \mathbf{x}_i} 
.
\end{eqnarray*}
Since $c_d(\kappa) = {\cal O}(\kappa^{d/2 - 1/2})$, we have $L({\cal X}; \boldsymbol{\theta}^{(q)}, \boldsymbol{\pi}^{(q)}) \uparrow \infty$ as $q \uparrow \infty$.
\end{proof}

\subsection{Technical Lemmas}
The following technical lemmas are useful for the proof of the main results. Lemma \ref{bessel_lemma} gives the asymptotic expansion of the modified Bessel function of the first kind and is straight forward to derive.
\begin{lemma}
\label{bessel_lemma}
Let $I_r(\cdot)$ be the modified Bessel function of the first kind with degree $r$. As $z \rightarrow \infty$, we have
$$ I_r(z) = \frac{e^{z}}{\sqrt{2\pi z}} (1 + {\cal O}(z^{-1})) .$$
\end{lemma}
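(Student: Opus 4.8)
The plan is to derive the expansion by Laplace's method (Watson's lemma) applied to a Poisson-type integral representation of $I_r$. Recall that for $\Re(r) > -1/2$ the modified Bessel function of the first kind admits the representation
$$ I_r(z) = \frac{(z/2)^r}{\sqrt{\pi}\,\Gamma(r + 1/2)} \int_{-1}^{1} (1 - t^2)^{r - 1/2} e^{z t}\, dt . $$
In the intended application $r = d/2 - 1$ with $d \ge 2$, so $r \ge 0$ and the representation is valid. Since the integrand is dominated by the factor $e^{zt}$, which is maximized at the right endpoint $t = 1$, the behaviour as $z \to \infty$ is governed entirely by the amplitude near $t = 1$.

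First I would make the substitution $t = 1 - u$, which uses $1 - t^2 = u(2-u)$ and $e^{zt} = e^z e^{-zu}$ to pull out the dominant exponential, giving
$$ I_r(z) = \frac{(z/2)^r}{\sqrt{\pi}\,\Gamma(r + 1/2)}\, e^{z} \int_{0}^{2} \bigl(u(2 - u)\bigr)^{r - 1/2} e^{-z u}\, du . $$
The remaining integral is now in Laplace form with peak at $u = 0$. The amplitude factors as $\bigl(u(2-u)\bigr)^{r-1/2} = 2^{r-1/2}\, u^{r-1/2}\,(1 - u/2)^{r-1/2}$, and expanding $(1-u/2)^{r-1/2} = 1 - (r - 1/2)(u/2) + {\cal O}(u^2)$ near $u = 0$ exhibits the integrand as $2^{r-1/2}\sum_{k \ge 0} c_k\, u^{\,r - 1/2 + k}$ with explicit coefficients $c_k$, where $c_0 = 1$.

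Next I would invoke Watson's lemma: the contribution near $u=0$ dominates, the tail away from $u=0$ is exponentially small in $z$, and term-by-term integration using $\int_0^\infty u^{\,r-1/2+k} e^{-zu}\,du = \Gamma(r + 1/2 + k)\, z^{-(r + 1/2 + k)}$ produces an asymptotic series in powers of $z^{-1}$. The leading ($k=0$) term gives
$$ I_r(z) \sim \frac{(z/2)^r}{\sqrt{\pi}\,\Gamma(r + 1/2)}\, e^{z}\, 2^{r-1/2}\,\Gamma(r + 1/2)\, z^{-(r+1/2)} = \frac{e^{z}}{\sqrt{2 \pi z}} , $$
after the powers of $2$ and $z$ cancel and $\Gamma(r+1/2)$ drops out, while the $k=1$ term is smaller by a factor of order $z^{-1}$. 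Collecting the leading term and bounding the remainder of the series by its first neglected term then yields $I_r(z) = \frac{e^z}{\sqrt{2\pi z}}\bigl(1 + {\cal O}(z^{-1})\bigr)$, as claimed.

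The only point requiring genuine care is the justification of Watson's lemma itself: splitting the integral at a fixed $\delta \in (0,2)$ so that the tail $\int_\delta^2$ is exponentially small, and extending the head $\int_0^\delta$ to $\int_0^\infty$ at the cost of a further exponentially small error, so that the algebraic $z^{-1}$ expansion is legitimate. This is entirely standard, which is why the statement is labelled straightforward; for the present purposes one needs only the first two terms, so the remainder estimate reduces to controlling a single integral $\int_0^\infty u^{\,r+1/2} e^{-zu}\,du = \Gamma(r+3/2)\,z^{-(r+3/2)}$, which is precisely a factor ${\cal O}(z^{-1})$ below the leading term.
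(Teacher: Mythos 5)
Your proposal is correct, and it is more than the paper offers: the paper states this lemma without proof, remarking only that it is ``straight forward to derive,'' and your Poisson-representation-plus-Watson's-lemma argument is precisely the standard derivation being implicitly invoked (the computation checks out, with the constants $2^{r-1/2}$, $\Gamma(r+1/2)$, and $(z/2)^r$ cancelling to give $e^z/\sqrt{2\pi z}$, and the $k=1$ term supplying the ${\cal O}(z^{-1})$ relative error). The only point worth a passing remark in a write-up is that for $r<1/2$ (which occurs in the paper's application when $d=2$, so $r=0$) the amplitude $\bigl(u(2-u)\bigr)^{r-1/2}$ has a singularity at the far endpoint $u=2$; it is integrable since $r>-1/2$, so your exponentially-small tail bound $\int_\delta^2$ still goes through, but you should say so explicitly rather than treat the tail as bounded.
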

Lemma \ref{covering_lemma} concerns the covering of the surface of the unit hypersphere with $B_{\epsilon}$-balls and is needed for the proof of Lemma \ref{key_lemma_1}.
\begin{lemma}
\label{covering_lemma}
For any sufficiently small positive number $\epsilon$, there exists points $\boldsymbol{\eta}_1, \ldots, \boldsymbol{\eta}_m \in \mathbb{S}^{d-1}$ with $m \le A_1/\epsilon^{d-1}$ where $A_1 > 0$ is some constant which depends on the dimension $d$ such that for any $\mathbf{x} \in \mathbb{S}^{d-1}$, there exists $\boldsymbol{\eta}_j$ with $B_{\epsilon}(\mathbf{x}) \subset B_{2 \epsilon}(\boldsymbol{\eta}_j) $.
\end{lemma}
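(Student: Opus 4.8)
The plan is to build the covering from a maximal $\epsilon$-separated subset of the sphere and then convert the separation property into the stated cardinality bound through a packing (volume) argument. First I would choose $\{\eta_1, \ldots, \eta_m\} \subset \mathbb{S}^{d-1}$ to be a maximal set of points that are pairwise $\epsilon$-separated in the metric $d(x,y) = \arccos(x^T y)$, i.e. $d(\eta_i, \eta_j) \ge \epsilon$ for all $i \ne j$. Such a maximal set exists and is finite because $\mathbb{S}^{d-1}$ is compact (a greedy selection must terminate, or one may invoke Zorn's lemma). By maximality, no further point can be adjoined without destroying $\epsilon$-separation, so for every $x \in \mathbb{S}^{d-1}$ there is an $\eta_j$ with $d(x, \eta_j) < \epsilon$; that is, $\{\eta_j\}$ is an $\epsilon$-net of $\mathbb{S}^{d-1}$.

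The containment claim then follows at once from the triangle inequality for the geodesic metric $d$. Fix $x$ and pick $\eta_j$ with $d(x,\eta_j) < \epsilon$. For any $y \in B_\epsilon(x)$ we have $d(y, \eta_j) \le d(y,x) + d(x,\eta_j) < \epsilon + \epsilon = 2\epsilon$, hence $y \in B_{2\epsilon}(\eta_j)$, and therefore $B_\epsilon(x) \subset B_{2\epsilon}(\eta_j)$, as required.

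It remains to bound $m$. Because the $\eta_j$ are $\epsilon$-separated, the balls $B_{\epsilon/2}(\eta_j)$ are pairwise disjoint: if a point $y$ lay in two of them, then $d(\eta_i,\eta_j) \le d(\eta_i,y)+d(y,\eta_j) < \epsilon$, a contradiction. Summing the spherical measure over these disjoint balls and comparing with the total mass of the sphere gives $\sum_{j=1}^m \omega(B_{\epsilon/2}(\eta_j)) \le \omega(\mathbb{S}^{d-1})$. To convert this into an upper bound on $m$ I need a matching \emph{lower} bound on the measure of a small ball, which is the one new ingredient beyond the upper bound already recorded in the excerpt. Using the exact expression $\omega(B_r(x)) = \frac{2\pi^{(d-1)/2}}{\Gamma((d-1)/2)} \int_0^r \sin^{d-2}(\theta)\,d\theta$ together with Jordan's inequality $\sin\theta \ge (2/\pi)\theta$ on $[0,\pi/2]$, I obtain $\omega(B_{\epsilon/2}(x)) \ge c_d\, \epsilon^{d-1}$ for all sufficiently small $\epsilon$, where $c_d > 0$ depends only on $d$. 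Substituting this lower bound gives $m \le \omega(\mathbb{S}^{d-1})/(c_d\, \epsilon^{d-1}) =: A_1/\epsilon^{d-1}$, which is the asserted bound.

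I expect the only genuinely delicate step to be the lower bound on $\omega(B_{\epsilon/2}(x))$: the restriction to sufficiently small $\epsilon$ is precisely what lets me bound $\int_0^{\epsilon/2}\sin^{d-2}(\theta)\,d\theta$ from below by a constant multiple of $(\epsilon/2)^{d-1}$ via the inequality $\sin\theta \ge (2/\pi)\theta$, and care is needed only to keep the argument of the sine within $[0,\pi/2]$ so that this inequality applies. Everything else---the existence of a maximal separated set, its $\epsilon$-net property, and the triangle-inequality containment---is routine.
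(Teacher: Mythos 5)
Your proof is correct, but it takes a genuinely different---and in fact more complete---route than the paper's. The paper obtains the points $\eta_1,\ldots,\eta_m$ by extracting a finite subcover of the open cover $\{B_{\epsilon}(\eta)\}$ via compactness of $\mathbb{S}^{d-1}$, proves the containment $B_{\epsilon}(x) \subset B_{2\epsilon}(\eta_j)$ by the same triangle-inequality computation you use, and then merely asserts the cardinality bound $m \le A_1/\epsilon^{d-1}$, declaring it ``clearly true for $d=2$'' with the general case left to an unspecified ``induction with a geometric argument.'' Compactness alone gives no quantitative control on the size of a finite subcover, so the paper's write-up leaves the quantitative half of the lemma essentially unproved. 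Your construction via a maximal $\epsilon$-separated set repairs exactly this gap: maximality yields the $\epsilon$-net property (hence the containment, by the identical triangle inequality), while separation makes the balls $B_{\epsilon/2}(\eta_j)$ pairwise disjoint, and your lower bound $\omega(B_{\epsilon/2}(x)) \ge c_d\,\epsilon^{d-1}$---obtained from the exact surface-measure formula together with $\sin\theta \ge (2/\pi)\theta$ on $[0,\pi/2]$, valid because $\epsilon$ is small---converts disjointness into $m \le \omega(\mathbb{S}^{d-1})/(c_d\,\epsilon^{d-1})$. This is the standard packing/covering duality, and it buys a self-contained, dimension-uniform proof of the bound that the paper only gestures at; the one point requiring care, which you correctly flag, is keeping the argument of the sine in $[0,\pi/2]$ so that Jordan's inequality applies.
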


\begin{proof}

Fix $\epsilon > 0$ and consider an open cover $\{ B_{\epsilon}(\boldsymbol{\eta}_i)\}_i$ of $\mathbb{S}^{d-1}$. By compactness of $\mathbb{S}^{d-1}$, there exists a finite subcover $\{B_{\epsilon}(\boldsymbol{\eta}_1), \ldots, B_{\epsilon}(\boldsymbol{\eta}_m)\}$ of $\mathbb{S}^{d-1}$. Let $\mathbf{x} \in \mathbb{S}^{d-1}$ be fixed, and let $ \mathbf{z} \in B_{\epsilon}(\mathbf{x})$ be arbitrary. We must show that $d(\mathbf{z},\boldsymbol{\eta}_i) < 2 \epsilon$ for some $i$. 
\\\\
Since $\{B_{\epsilon}(\boldsymbol{\eta}_1), \ldots, B_{\epsilon}(\boldsymbol{\eta}_m)\}$ is an open cover of $\mathbb{S}^{d-1}$, we must have $\mathbf{x} \in B_{\epsilon}(\boldsymbol{\eta}_i)$ for some $i$. Therefore, 
$$ d(\mathbf{z}, \boldsymbol{\eta}_i) \le d(\mathbf{z},\mathbf{x}) + d(\mathbf{x}, \boldsymbol{\eta}_i) < \epsilon + \epsilon = 2\epsilon.$$
Hence, we have $B_{\epsilon}(\mathbf{x}) \subset B_{2 \epsilon}(\boldsymbol{\eta}_i)$. Since $x$ is arbitrary, the result follows. \\\\
The statement that $m \le A_1/\epsilon^{d-1}$ for some constant $A_1 > 0$ is clearly true for $d=2$, and the general case holds using proof by induction with a geometric argument.

\end{proof}

\subsection{Proof of Lemma \ref{key_lemma_1}}
\textbf{Proof of Lemma \ref{key_lemma_1}}
\begin{proof}
Let $\epsilon$ be a small positive number. By Lemma \ref{covering_lemma}, there exists $\boldsymbol{\eta}_1, \ldots, \boldsymbol{\eta}_m \in \mathbb{S}^{d-1}$ with $m \le A_1 / \epsilon^{d-1} $ such that for any $\mathbf{x} \in \mathbb{S}^{d-1}$, we have $B_{\epsilon}(\mathbf{x}) \subset B_{2 \epsilon}(\boldsymbol{\eta}_j)$ for some $j$.
Consequently, we have that 
\begin{eqnarray*}
\sup_{\boldsymbol{\mu} \in S^{d-1}} \bigg\{ \frac{1}{n} \sum_{i=1}^{n} I\big(X_i \in B_{\epsilon}(\boldsymbol{\mu})\big)  \bigg\} &\le& \max_{j=1,\ldots,m} \bigg\{ \frac{1}{n} \sum_{i=1}^{n} I\big(X_i \in B_{2 \epsilon}(\boldsymbol{\eta}_j)\big) \bigg\} \\
& \le & \max_{j=1,\ldots,m} \bigg\{ \frac{1}{n} \sum_{i=1}^{n} I \big( X_i \in B_{2\epsilon}(\boldsymbol{\eta}_j) \big) - \gamma_0(B_{2\epsilon}(\boldsymbol{\eta}_{j}))  \bigg\} \\
&& + \max_{j=1, \ldots, m} \bigg\{ \gamma_0(B_{2 \boldsymbol{\epsilon}}(\boldsymbol{\eta}_j))  \bigg\} ,
\end{eqnarray*}
where $ \gamma_0(B_{2\epsilon}(\boldsymbol{\eta}_j)) = \gamma_0(X \in B_{2\epsilon}(\boldsymbol{\eta}_j))$ is the probability that a random variable $X$ generated from the $\gamma_0$ takes value in the $2 \epsilon-$ball $B_{2\epsilon}$. For each $j=1, \ldots, m$, $\gamma_0(B_{2 \epsilon}(\boldsymbol{\eta}_j))$ can be bounded above by 
$$ \gamma_0(B_{2\epsilon}(\boldsymbol{\eta}_j)) = \int_{B_{2 \epsilon}(\boldsymbol{\eta}_j)} f_0(\mathbf{x}) d \omega(\mathbf{x}) \le M \omega(B_{2\epsilon}(\boldsymbol{\eta}_j)) = M A_2 \epsilon^{d-1} = \delta(\epsilon) , $$
where we recall that the constants $M$ and $A_2$ are defined in Equation \eqref{M_def} and \eqref{A2_def}, respectively, and the function $\delta(\cdot)$ is defined in Equation \eqref{delta_def}. This implies that
\begin{eqnarray}
\label{ineq1}
\max_{j=1, \ldots, m} \bigg\{ \gamma_0(B_{2\epsilon}(\boldsymbol{\eta}_j))  \bigg\} \le \delta(\epsilon) .
\end{eqnarray}
Define the quantities
$$ \Delta_{nj} := \bigg| \frac{1}{n} \sum_{i=1}^{n} I \big( X_i \in B_{2\epsilon}(\boldsymbol{\eta}_j) \big) - \gamma_0(B_{2\epsilon}(\boldsymbol{\eta}_{j})) \bigg| , \quad j=1,\ldots,m.$$
For $t > 0$, by Bernstein's inequality we have
\begin{eqnarray}
\label{ineq_bernstein}
 \mathbb{P}( \Delta_{nj} \ge t ) &\le& 2 \exp\bigg( - \frac{ \frac{1}{2} n^2 t^2 }{n \gamma_0(B_{2\epsilon}(\boldsymbol{\eta}_j) (1-\gamma_0(B_{2\epsilon}(\boldsymbol{\eta}_j))) + \frac{1}{3} nt }  \bigg) \nonumber \\
  & \le & 2 \exp \bigg( - \frac{n^2 t^2}{2 n M A_2 \epsilon^{d-1} + \frac{2}{3} n t }  \bigg) \nonumber \\
    &=& 2 \exp \bigg( - \frac{n t^2}{2\delta(\epsilon) + \frac{2}{3}  t } \bigg)
\end{eqnarray}
We note that $\delta(\epsilon) > \log n / n$ whenever $\epsilon^{d-1} > \log n / (M n A_2)$. Letting $t = \delta(\epsilon)$ in the inequality above, we obtain
$$ \mathbb{P}(\Delta_{nj} \ge \delta(\epsilon)) \le 2 n^{-3} .$$
Since $\epsilon^{d-1} > \log n / (M n A_2)$ implies $m < n$ for sufficiently large $n$, we apply the union upper bound to obtain 
\begin{eqnarray}
\label{ineq2}
 \mathbb{P} \Big( \max_j \Delta_{nj} \ge \delta(\epsilon) \Big) \le 2 n^{-2} .
\end{eqnarray}
Combining the two inequalities \eqref{ineq1} and \eqref{ineq2}, the first conclusion of the lemma follows by applying the Borel-Cantelli lemma.
\\\\
For the second statement of the lemma, we observe that $ 0 < \epsilon^{d-1} < \log n / (M n A_2)$ implies that $\delta(\epsilon) < \log n / n$. Let $t := (\log n)^{2} / n$, for large enough $n$, we have
$$ 2 \delta(\epsilon) < t / 3 .$$ 
Substituting $t$ into inequality \eqref{ineq_bernstein} gives
$$ \mathbb{P} \bigg( \Delta_{nj} \ge \frac{ (\log n)^{2} }{ n } \bigg) \le \exp( - (\log n)^{2} ) \le n^{-3} . $$
The second conlusion of the lemma follows from an application of the Borel-Cantelli lemma.

\end{proof}

\subsection{Proof of Strong Consistency of PMLE}
\begin{proof}
We prove the strong consistency of PMLE for the case of two mixture components. The proof of strong consistency for the general case of $p$ mixture components is analogous but significantly more tedious. We briefly outline the key steps for the proof of the $p$ mixture components case, which are also along the lines of Section 3.3 of \citep{Chen2008}.
\\\\
Recall that a two component mixture mixing measure has the form $ \gamma = \pi \gamma_{\theta_1} + (1 - \pi) \gamma_{\theta_2}$, where $0 \le \pi \le 1$, $\theta_1 = (\boldsymbol{\mu}_1, \kappa_1)$, $\theta_2 = (\boldsymbol{\mu}_2, \kappa_2)$ and the corresponding penalized log-likelihood function is given by 
$$ pl_n(\gamma) = l_n(\gamma) + \tilde{p}_n(\kappa_1) + \tilde{p}_n(\kappa_2) .$$ 
Let $K_0 = E_0 \log f(X; \gamma_0)$ where $E_0$ denotes the expectation under the true probability measure $\gamma_0$. We follow the strategy in \citep{Chen2008} to divide the parameter space $ \Gamma = \Pi \times \Theta^{2}$ into three regions $$ \Gamma_1 := \{\pi, 1 - \pi, (\boldsymbol{\mu}_1, \kappa_1), (\boldsymbol{\mu}_2, \kappa_2): \kappa_1 \ge \kappa_2 \ge \nu_0 \} ,$$ $$ \Gamma_2 := \{\pi, 1 - \pi, (\boldsymbol{\mu}_1, \kappa_1), (\boldsymbol{\mu}_2, \kappa_2): \kappa_1 \ge \tau_0, \kappa_2 \le \nu_0\} ,$$ $$\Gamma_3^ := \Gamma - (\Gamma_1 \cup \Gamma_2) .$$
We require $\nu_0$ and $\tau_0$ to be sufficiently large where the exact magnitude are to be specified later. We will show that the penalized MLE $\hat{\gamma}$ almost surely is not in $\Gamma_1$ or $\Gamma_2$. Therefore, $\hat{\gamma}$ must be in $\Gamma_3$ and its consistency follows from Theorem 5 of \cite{Redner1981}.
\\\\
We first consider the region $ \Gamma_1 $, and define the following index sets:
$$ D_1 := \{i: X_i \in B_{\epsilon_1}(\boldsymbol{\mu}_1)\}, \quad D_2 := \{i: X_i \in B_{\epsilon_2}(\boldsymbol{\mu}_2)\} ,$$
where $$   \epsilon_i = \frac{1}{(\log \kappa_i)^{2}}, \quad i=1,2 .$$ 
$D_1$ and $D_2$ consist of observations that are very close to $\boldsymbol{\mu}_1$ and $\boldsymbol{\mu}_2$, respectively. We separately assess the likelihood contributions of the observations in $D_1$ and $D_2$ in Lemmas \ref{Gamma1_lemma} and \ref{Gamma2_lemma}. 
\\\\
By Lemmas \ref{Gamma1_lemma} and \ref{Gamma2_lemma} the maximizer $\hat{\gamma}_n$ of $pl_n(\gamma)$ is almost surely in $\Gamma_3$. Lemmas \ref{Gamma1_lemma} and \ref{Gamma2_lemma} also imply that $\gamma_0 \in \Gamma_3$. We want to apply Theorem 5 of \cite{Redner1981} to conclude the strong consistency of the estimator $\hat{\gamma}_n$. First, it is clear following the definition of the metric $\rho$ on $\Theta$ given in Equation \ref{metric_theta} that $\Gamma_3$ is a compact subset of $\Gamma$ containing $\gamma_0$. For any $\theta = (\boldsymbol{\mu}, \kappa) \in \Theta$, we can choose $r$ sufficiently small such that for all $\theta' =(\boldsymbol{\mu}', \kappa')$ with $\rho(\theta, \theta') < r$, the density $f(\cdot; \boldsymbol{\mu}', \kappa')$ is bounded. Thus, we have
\begin{eqnarray*}
 \int_{ \mathbb{S}^{d-1} } \log \bigg( \max \big( 1, \sup_{ \theta': \rho(\theta, \theta') < r} f(\mathbf{x}, \boldsymbol{\mu}', \kappa') \big) \bigg) d \gamma_0 & \le &  \int_{\mathbb{S}^{d-1}} C d \gamma_0 \\
 & \le & C,
\end{eqnarray*}
for some constant $C$. Therefore, Assumption 2a of \cite{Redner1981} is also satisfied. For any $\theta_1, \theta_2 \in \Theta$ where $\theta_1 = (\boldsymbol{\mu}_1, \kappa_1)$ and $\theta_2 = (\boldsymbol{\mu}_2, \kappa_2)$, since $f(\cdot, \boldsymbol{\mu}_1, \kappa_1)$ is bounded, we have 
$$ \int_{\mathbb{S}^{d-1}}  |\log f(\mathbf{x}; \boldsymbol{\mu}_1, \kappa_1)| d \gamma_{\theta_2} < \infty .$$
Therefore, Assumption 4 of \cite{Redner1981} is also satisfied. We can conclude by applying Theorem 5 of \cite{Redner1981} that $\hat{\gamma}_n \rightarrow \gamma_0$ almost surely in the quotient space $\hat{\Gamma}$.
\\\\
We outline the key steps of the proof for the general case of $p$ mixture components, which follows the same strategy as the proof for the 2 components case.  We divide the parameter space $\Gamma = \Pi \times \Theta^p$ into $p+1$ regions
\begin{eqnarray*}
 \Gamma_1 := \{ (\pi_1, \ldots, \pi_p), (\boldsymbol{\mu}_1, \kappa_1), \ldots, (\boldsymbol{\mu}_p, \kappa_p): \kappa_1 \ge \kappa_2 \ge \cdots \ge \kappa_p \ge \nu_{10} \} ,\\
 \Gamma_k := \{ (\pi_1, \ldots, \pi_p), (\boldsymbol{\mu}_1, \kappa_1), \ldots, (\boldsymbol{\mu}_p, \kappa_p):  \kappa_1 \ge \cdots \ge \kappa_{p-k+1} \ge \nu_{k0}, \\ \nu_{(k-1)0} \ge \kappa_{p-k+2} \ge \cdots \ge \kappa_p \} ,
\end{eqnarray*}
for $k=2, \ldots, p$, and $\Gamma_{p+1} = \Gamma - \cup_{k=1}^{p} \Gamma_k $.\\
Given suitable constraints on the values of $\nu_{10}, \ldots, \nu_{k0}$, an extension of Lemma \ref{Gamma1_lemma} shows that $$ \sup_{\gamma \in \Gamma_1} pl_n(\gamma) - pl_n(\gamma_0) \rightarrow - \infty,\quad \mbox{a.s.} ,$$
and an extension of Lemma \ref{Gamma2_lemma} shows that 
$$ \sup_{\gamma \in \Gamma_k} pl_n(\gamma) - pl_n(\gamma_0) \rightarrow - \infty, \quad \mbox{a.s.} $$
for $k=2, \ldots, p$. Therefore, the probability that $\hat{\gamma}_n$, the maximizer of $pl_n(\gamma)$ belongs to $\Gamma_1, \ldots, \Gamma_p$ goes to zero. With $\hat{\gamma}_n$ almost surely in $\Gamma_p$ which is a compact subset of $\Gamma$, we can apply Theorem 5 of \cite{Redner1981} to conclude the strong consistency of $\hat{\gamma}_n$.

\end{proof}

\begin{lemma}
\label{Gamma1_lemma}
$ \sup_{\gamma \in \Gamma_1} pl_n(\gamma) - pl_n(\gamma_0) \rightarrow - \infty, \quad \mbox{a.s.} $
\end{lemma}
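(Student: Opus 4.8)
The plan is to bound $pl_n(\gamma)-pl_n(\gamma_0)$ from above, uniformly over $\Gamma_1$, by a quantity diverging to $-\infty$. Writing $pl_n(\gamma)-pl_n(\gamma_0)=\sum_{i=1}^n\log\{g(x_i;\gamma)/f_0(x_i)\}+\tilde p_n(\kappa_1)+\tilde p_n(\kappa_2)-\tilde p_n(\kappa_1^0)-\tilde p_n(\kappa_2^0)$, the two true-parameter penalties are $o(n)$ by C2 and may be discarded. Two elementary facts drive the argument. First, since $f_0$ is a finite vMF mixture with fixed parameters it is continuous and strictly positive on the compact sphere, so $f_0\ge c_0>0$ everywhere. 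Second, Lemma \ref{bessel_lemma} gives $c_d(\kappa)e^{\kappa}=O(\kappa^{(d-1)/2})$, hence $\log\{c_d(\kappa)e^{\kappa}\}=\tfrac{d-1}{2}\log\kappa+O(1)$, and for $x\notin B_{\epsilon}(\mu)$ with $\epsilon=(\log\kappa)^{-2}$ one has $f(x;\mu,\kappa)\le c_d(\kappa)e^{\kappa\cos\epsilon}$, which tends to $0$ as $\kappa\to\infty$ because the factor $e^{-\kappa(1-\cos\epsilon)}\approx e^{-\kappa/(2(\log\kappa)^4)}$ beats the polynomial $\kappa^{(d-1)/2}$. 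I would therefore fix $\nu_0$ so large that $B_0:=\sup_{\kappa\ge\nu_0}c_d(\kappa)e^{\kappa\cos((\log\kappa)^{-2})}<c_0$ and that $(\log\nu_0)^{-(2d-2)}<\xi_0$; this forces every $\gamma\in\Gamma_1$ into the range covered by Lemma \ref{key_lemma_1} and \ref{key_lemma_2}.

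With $\epsilon_i=(\log\kappa_i)^{-2}$ and $D_1,D_2$ as defined, I would split the sample into $D_1$, $D_2\setminus D_1$, and the remainder $S_0$. For $i\in S_0$ both $x_i\notin B_{\epsilon_1}(\mu_1)$ and $x_i\notin B_{\epsilon_2}(\mu_2)$, so $g(x_i;\gamma)\le B_0$ and $\log\{g(x_i;\gamma)/f_0(x_i)\}\le\log(B_0/c_0)<0$. Since the counting lemmas bound $|D_1|+|D_2|$ by a small fraction of $n$ (plus an $O((\log n)^2)$ term), we have $|S_0|\ge n/2$ for large $n$, whence $\sum_{i\in S_0}\log\{g/f_0\}\le(n/2)\log(B_0/c_0)\to-\infty$ linearly in $n$. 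This is the term producing the divergence.

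It remains to show the $D_1$ and $D_2\setminus D_1$ contributions, together with $\tilde p_n(\kappa_1)+\tilde p_n(\kappa_2)$, are only $o(n)$ (indeed a controllably small multiple of $n$). For $i\in D_1$ I would use $\log g(x_i;\gamma)\le\tfrac{d-1}{2}\log\kappa_1+O(1)$, whereas for $i\in D_2\setminus D_1$ the component-$1$ density is already bounded by $B_0$ (as $x_i\notin B_{\epsilon_1}(\mu_1)$), giving $\log g(x_i;\gamma)\le\tfrac{d-1}{2}\log\kappa_2+O(1)$; this matching of each ball with its own concentration parameter is essential. Pairing $D_1$ with $\tilde p_n(\kappa_1)$ and $D_2$ with $\tilde p_n(\kappa_2)$, I would split into two regimes per component. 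In the moderate regime $\log n/(MnA_2)\le\epsilon_i^{d-1}<\xi_0$, Lemma \ref{key_lemma_2} gives $|D_i|\le 4n\delta(\epsilon_i)$, so the boost $|D_i|\cdot\tfrac{d-1}{2}\log\kappa_i$ is $O\big(n/(\log\nu_0)^{2d-3}\big)$, a fraction of $n$ shrinking with $\nu_0$, while the penalty is $o(n)$ by C2. In the extreme regime $\epsilon_i^{d-1}<\log n/(MnA_2)$, Lemma \ref{key_lemma_2} gives $|D_i|\le 2(\log n)^2$, so the boost is at most $(d-1)(\log n)^2\log\kappa_i+O((\log n)^2)$, and C3 supplies $\tilde p_n(\kappa_i)\le -3(\log n)^2\log\kappa_i$, so their sum is $O((\log n)^2)=o(n)$. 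Adding the two components, the total is at most $\eta n+o(n)$ with $\eta\to0$ as $\nu_0\to\infty$.

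Combining, $\sup_{\gamma\in\Gamma_1}\{pl_n(\gamma)-pl_n(\gamma_0)\}\le(n/2)\log(B_0/c_0)+\eta n+o(n)$, and choosing $\nu_0$ so that $\eta+\tfrac12\log(B_0/c_0)<0$ makes the right-hand side diverge to $-\infty$. Because all the bounds invoked from Lemma \ref{key_lemma_1} and \ref{key_lemma_2} hold outside a null set for all large $n$ (Borel--Cantelli), and because they are uniform in $\mu_1,\mu_2\in\mathbb{S}^{d-1}$ and in $\epsilon_1,\epsilon_2$, the conclusion holds almost surely and uniformly over $\Gamma_1$. The main obstacle is precisely this uniform control of the $D_1\cup D_2$ contribution over the unbounded range $\kappa_i\in[\nu_0,\infty)$: one must assign each in-ball observation to the correct concentration parameter (so that a large $\kappa_1$ never inflates the count in the much wider ball $B_{\epsilon_2}$), and then verify that in the extreme regime the penalty strength in C3 genuinely dominates the per-observation density boost $\tfrac{d-1}{2}\log\kappa_i$ accumulated over the $O((\log n)^2)$ observations the counting lemma permits.
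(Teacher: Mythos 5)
Your architecture is the paper's own: the same sets $D_1$, $D_2\setminus D_1$ and their complement, the same two-regime treatment of each $D_i$ via Lemma~\ref{key_lemma_2} (count $4n\delta(\epsilon_i)$ in the moderate regime, count $2(\log n)^2$ plus condition C3 in the extreme regime), and the same pairing of each ball with its own concentration parameter and penalty. Your one real departure is the handling of the observations outside both balls: you bound the log-likelihood ratio pointwise by $\log(B_0/c_0)<0$, using $f_0\ge c_0>0$ on the compact sphere and $\sup_{\kappa\ge\nu_0}c_d(\kappa)e^{\kappa\cos((\log\kappa)^{-2})}\to 0$, whereas the paper bounds the per-observation log-likelihood by $\log\nu_0-\nu_0/(\log\nu_0)^4\le 2K_0-4$ and compares against $pl_n(\gamma_0)\approx nK_0$ via the strong law. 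Both are valid; yours is slightly cleaner in that it never needs the constant $K_0$.

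There is, however, a genuine gap in your extreme-regime step, and it is exactly at the point you yourself flag as the crux. You (correctly) use $c_d(\kappa)e^{\kappa}=O(\kappa^{(d-1)/2})$, so the boost over $D_i$ is at most $(d-1)(\log n)^2\log\kappa_i+O((\log n)^2)$, while C3 only supplies $-3(\log n)^2\log\kappa_i$. The sum is $(d-4)(\log n)^2\log\kappa_i+O((\log n)^2)$, which is $O((\log n)^2)$ only when $d\le 4$; for $d\ge 5$ it is positive and unbounded over $\kappa_i\in[\nu_0,\infty)$ (and $\sup_{\Gamma_1}$ ranges over arbitrarily large $\kappa_i$), so the claimed uniform $o(n)$ bound, and with it the whole argument, collapses. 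It is worth knowing that the paper gets past this same point only by asserting $c_d(\kappa_1)e^{\kappa_1}\le A_3\sqrt{\kappa_1}$, an inequality that is false for $d\ge 3$ (the correct rate is the $\kappa^{(d-1)/2}$ you use, as the paper itself acknowledges in its proof of Theorem~\ref{thm:divg}); with the corrected rate, the paper's proof needs the constant in C3 to exceed $d-1$ just as yours does. So your proof is sound for $d\in\{2,3,4\}$, but for $d\ge 5$ it does not go through under C3 as stated: one must either strengthen C3 (e.g.\ $\tilde p_n(\kappa_l)\le -(d+1)(\log n)^2\log\kappa_l$) or argue more finely, exploiting that only observations within distance $O(\kappa_i^{-1/2})$ of $\mu_i$, rather than $(\log\kappa_i)^{-2}$, actually receive the full $\tfrac{d-1}{2}\log\kappa_i$ boost.
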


\begin{proof}

The log-likelihood contributions of observations in any index set $D$ is given by
$$ l_n(\gamma;D) = \sum_{i \in D} \log \bigg( \pi c_d(\kappa_1) \exp(\kappa_1 \mathbf{x}_i^{T} \boldsymbol{\mu}_1) + (1-\pi)  c_d(\kappa_2) \exp(\kappa_2 \mathbf{x}_i^{T} \boldsymbol{\mu}_2) \bigg) .$$
For any observation $i$ in $D_1$, its likelihood contribution is bounded above by $\exp(\kappa_1) / c_d(\kappa_1)$, and by the asymptotic expansion of the modified Bessel function of the first kind in Lemma \ref{bessel_lemma}, we have
\begin{eqnarray}
\label{ineq3}
  c_d(\kappa_1) \exp(\kappa_1)  \le A_3 \sqrt{\kappa_1} .
\end{eqnarray}
for some constant $A_3 > 0$. Consequently, the log-likelihood of observations in $D_1$ is bounded above by
$$ l_n(\gamma;D_1) \le n(D_1) \log( A_3 \sqrt{\kappa_1}) ,$$
where $n(D_1)$ is the number of observations in $D_1$. By Lemma \ref{key_lemma_2}, for 
$$ \frac{ \log n}{M n A_2} \le \epsilon_1^{d-1} < \xi_0 ,$$ 
$n(D_1)$ is almost surely bounded above by 
$$ n(D_1) = \sum_{i=1}^{n} I(X_i \in B_{\epsilon_1}(\boldsymbol{\mu}_1)) \le 4 n \delta(\epsilon_1) = 4 n M A_2 \epsilon_1^{d-1} .$$
Therefore, recalling that $\epsilon_1 = 1 / (\log \kappa_1)^{2}$, $l_n(\gamma;D_1)$ can be bounded above by:
\begin{eqnarray}
\label{D1_bound1}
 l_n(\gamma;D_1) &\le& 4 n M A_2 \epsilon_1^{d-1} ( \log A_3 \sqrt{\kappa} ) \nonumber \\ 
    & \le & A_4 n \frac{1}{(\log \kappa_1)^{2 d - 3}} \nonumber \\
    & \le & A_4 n \frac{1}{(\log \nu_0)^{2 d - 3}} 
\end{eqnarray}
where $A_4 > 0$ is some constant, and the last inequality follows from $\kappa_1 > \nu_0$. For
$$ 0 < \epsilon_1^{d-1} < \frac{ \log n }{ M n A_2} ,$$
we have $ n(D_1) \le 2 (\log n )^{2}$ almost surely by Lemma \ref{key_lemma_2}. Therefore, with condition C3 on the penalty function $\tilde{p}_n(\kappa_1)$, almost surely
\begin{eqnarray}
\label{D1_bound2}
 n(D_1) (\log A_3 \sqrt{\kappa_1}) + \tilde{p}_n(\kappa_1) \le 2 (\log n)^{2} (\log A_3 \sqrt{\kappa_1}) + \tilde{p}_n(\kappa_1) < 0 .
\end{eqnarray}
The two bounds \eqref{D1_bound1} and \eqref{D1_bound2} can be combined to form
\begin{eqnarray}
\label{lik_bound1}
 l_n(\gamma;D_1) + \tilde{p}_n(\kappa_1) \le A_4 n \frac{1}{ (\log \nu_0)^{2 d -3} } .
\end{eqnarray}
The same approach can be used to derive the same bound for observations in $D_1^{c} D_2$:
\begin{eqnarray}
\label{lik_bound2}
 l_n(\gamma;D_1^{c} D_2) + \tilde{p}_n(\kappa_2) \le A_4  n \frac{1}{ (\log \nu_0)^{2 d -3} } .
\end{eqnarray}
For any observation $\mathbf{x}$ that falls outside both $D_1$ and $D_2$, we have that $\arccos(\mathbf{x}^{T} \boldsymbol{\mu}_1) > \epsilon_1$ and $\arccos(\mathbf{x}^{T} \boldsymbol{\mu}_2) > \epsilon_2$. Using the Taylor expansion of $\cos(\epsilon)$ for positive $\epsilon$ around 0,  we can show that for $\mathbf{x} \in D_1^{c} D_2^{c}$, we have
$$ \mathbf{x}^{T} \boldsymbol{\mu}_i \le 1 - \frac{1}{3 \epsilon_i^{2}} = 1 - \frac{1}{3 (\log \kappa_i)^{4}}, \quad i = 1, 2 .$$ Consequently, recalling the inequality \eqref{ineq3}, and for large enough $\nu_0$, the log-likelihood contribution of such $\mathbf{x}$ is bounded above by
$$ \log A_3 + \log \sqrt{\kappa_i} - \frac{\kappa_i}{3 (\log \kappa_i)^{4}} \le \log \kappa_i - \frac{\kappa_i}{4 (\log \kappa_i)^{4}} \le  \log \nu_0 - \frac{\nu_0}{4 (\log \nu_0)^{4}} .$$
For large enough $n$, we must have $n(D_1^{c} D_2^{c}) \ge n / 2 $ almost surely. Therefore, almost surely the log-likelihood of the observations in $D_1^{c} D_2^{c}$ is bounded above by
\begin{eqnarray}
\label{lik_bound3}
 l_n(\gamma;D_1^{c}D_2^{c}) \le (n/2) \bigg( \log \nu_0 - \frac{\nu_0}{ (\log \nu_0)^{4} } \bigg) .
\end{eqnarray}
For sufficiently large $\nu_0$, the following inequalities hold
$$ 2 A_4 \frac{1}{(\log \nu_0)^{2d-3}} \le 1, $$
$$\log \nu_0 - \frac{\nu_0}{(\log \nu_0)^{4}} \le 2 K_0 - 4 .$$
Therefore, combining the bounds \eqref{lik_bound1}, \eqref{lik_bound2}, \eqref{lik_bound3}, the penalized log-likelihood can be bounded above by
\begin{eqnarray*}
  pl_n(\gamma) &\le& 2 A_4 n \frac{1}{(\log \nu_0)^{2d-3}} +  (n/2) \bigg( \log \nu_0 - \frac{\nu_0}{(\log \nu_0)^{4}}  \bigg) \\
     & \le & n + (n/2)(2 K_0 - 4) \\
     & = & n(K_0 - 1) .
\end{eqnarray*}
By strong law of large numbers, we have $n^{-1} pl_n(\gamma_0) \rightarrow K_0$ almost surely. Therefore,
$$ \sup_{\gamma \in \Gamma_1} pl_n(\gamma) - pl_n(\gamma_0) \rightarrow -\infty $$
almost surely.

\end{proof}

\begin{lemma}
\label{Gamma2_lemma}
 $ \sup_{\gamma \in \Gamma_2} pl_n(\gamma) - pl_n(\gamma_0) \rightarrow - \infty, \quad \mbox{a.s.} $
\end{lemma}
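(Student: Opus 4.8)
The plan is to mirror the proof of Lemma~\ref{Gamma1_lemma}, but to account for the fact that in $\Gamma_2$ only the first component is degenerate (with $\kappa_1 \ge \tau_0 \to \infty$) while the second is non-degenerate ($\kappa_2 \le \nu_0$). Accordingly I would again isolate the observations that the spiking component can capture, controlling their contribution through the counting Lemma~\ref{key_lemma_2} together with the penalty condition C3, and then treat the remaining bulk of the data, which is now genuinely explained by the bounded-parameter component $f(\cdot;\theta_2)$. I would set $D_1 := \{i : X_i \in B_{\epsilon_1}(\mu_1)\}$ with $\epsilon_1 = 1/(\log\kappa_1)^{2}$ exactly as before.

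For the spike I would reuse the estimates leading to \eqref{lik_bound1}: for $x \in D_1$ the mixture density is at most $f(x;\theta_1)+f(x;\theta_2) \le A_3'\sqrt{\kappa_1}$ for $\kappa_1$ large (using Lemma~\ref{bessel_lemma} and $\kappa_2\le\nu_0$), and combining the count of $D_1$ from Lemma~\ref{key_lemma_2} with condition C3 yields $l_n(\gamma;D_1) + \tilde p_n(\kappa_1) \le A_4\, n/(\log\tau_0)^{2d-3}$ almost surely, which can be made an arbitrarily small multiple of $n$ by taking $\tau_0$ large. The remaining penalty term is harmless, since condition C2 gives $\tilde p_n(\kappa_2) \le \sup_{\kappa>0}\max\{0,\tilde p_n(\kappa)\} = o(n)$ uniformly in $\kappa_2$.

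The new ingredient is the contribution of $D_1^c$. For $x \in D_1^c$ one has $x^T\mu_1 \le 1 - 1/(3(\log\kappa_1)^{4})$, so by Lemma~\ref{bessel_lemma} the first component is exponentially small, $f(x;\theta_1) \le A_3\sqrt{\kappa_1}\,\exp(-\kappa_1/(3(\log\kappa_1)^{4})) =: \eta(\kappa_1)$, with $\sup_{\kappa_1\ge\tau_0}\eta(\kappa_1)\to 0$ as $\tau_0\to\infty$. Hence $f(x;\gamma)\le \eta + (1-\pi) f(x;\theta_2)$ on $D_1^c$. The crux is then to compare the bulk log-likelihood with $K_0$: since $\theta_2$ ranges over the compact set $\{(\mu_2,\kappa_2):\mu_2\in\mathbb{S}^{d-1},\,0\le\kappa_2\le\nu_0\}$, on which $\log(\eta+f(\cdot;\theta_2))$ is a continuous, uniformly bounded family, a uniform strong law gives $n^{-1}\sum_{i=1}^n \log(\eta+(1-\pi)f(X_i;\theta_2)) \to E_0\log(\eta+(1-\pi)f(X;\theta_2))$ almost surely, uniformly in $(\pi,\theta_2)$. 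Because the two true components are distinct, $f_0$ is not equal to any single vMF density, so the information inequality forces $\sup_{\theta_2}E_0\log f(X;\theta_2)=K_0-\beta$ for some $\beta>0$; letting $\tau_0$ be large (so $\eta$ is small) this yields $\sup_{\pi,\theta_2} E_0\log(\eta+(1-\pi)f(X;\theta_2)) \le K_0-\beta/2$.

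The main obstacle, and the step requiring care, is that the bound $f(x;\gamma)\le\eta+(1-\pi)f(x;\theta_2)$ holds only on the data-dependent set $D_1^c$, so I cannot simply replace $\sum_{i\in D_1^c}$ by the full sum appearing in the strong law. I would resolve this by exploiting that the non-degenerate component is bounded below, $f(x;\theta_2)\ge c_d(\kappa_2)e^{-\kappa_2}\ge c_0>0$ for all $x$ and all $\kappa_2\le\nu_0$, and by splitting on the weight $\pi$. When $\pi\le 1-\rho$ the omitted terms satisfy $\log(\eta+(1-\pi)f(X_i;\theta_2))\ge\log(\rho c_0)$, so completing the sum costs at most $n(D_1)\,|\log(\rho c_0)|$, which by Lemma~\ref{key_lemma_2} is again a small multiple of $n$ controllable through $\tau_0$; when $\pi> 1-\rho$ the density on all of $D_1^c$ is at most $\eta+\rho\,c_d(\nu_0)e^{\nu_0}$, so the at least $n/2$ observations in $D_1^c$ force $l_n(\gamma;D_1^c)\to-\infty$ at rate $n$ directly. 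In either case, combining with the spike estimate gives $\sup_{\gamma\in\Gamma_2}pl_n(\gamma)\le n(K_0-\beta/4)$ almost surely for suitable $\rho,\tau_0,\nu_0$, and since $n^{-1}pl_n(\gamma_0)\to K_0$ by the strong law of large numbers, the difference diverges to $-\infty$, exactly as in the concluding step of Lemma~\ref{Gamma1_lemma} and Section~3.3 of \cite{Chen2008}.
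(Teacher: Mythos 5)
Your proposal is correct, but it reaches the key ``bulk'' estimate by a genuinely different route than the paper. The treatment of the spiking component is the same in both arguments: observations in $D_1$ are counted via Lemma~\ref{key_lemma_2}, their $\log(A_3\sqrt{\kappa_1})$ contributions are absorbed by condition C3, and the result is an arbitrarily small multiple of $n$, exactly as in \eqref{lik_bound1}. The divergence is in how the remaining observations are compared with $K_0$. The paper constructs the surrogate
$g(x;\gamma) = \pi \exp\left(\tfrac{\kappa_1}{2}(x^{T}\mu_1 - 1)\right) + (1-\pi)\, c_d(\kappa_2)\exp(\kappa_2 x^{T}\mu_2)$,
which dominates the mixture density at \emph{every} observation (up to the factor $A_3\sqrt{\kappa_1}$ on $D_1$), satisfies $\int g < 1$, and hence by Jensen's inequality has strictly negative expected log-ratio against $f_0$; a single uniform strong law over $\Gamma_2$ then yields the linear negative drift \eqref{conv1}. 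Because $g$ bounds the density everywhere, the paper never faces the data-dependent-sum problem you identified. You instead replace the degenerate component by the constant envelope $\eta$, which is valid only on $D_1^c$, and repair the resulting partial-sum/full-sum mismatch with the case split on $\pi$ (completing the sum at cost $n(D_1)|\log(\rho c_0)|$ when $1-\pi \ge \rho$, and killing the likelihood outright when $1-\pi < \rho$); strict negativity then comes not from $\int g < 1$ but from identifiability -- $f_0$ is not any single vMF density -- together with compactness of $\{(\mu_2,\kappa_2): \kappa_2 \le \nu_0\}$, giving the KL-separation constant $\beta$. What each approach buys: the paper's surrogate avoids all case analysis and handles the $\kappa_1$-dependence in one stroke, at the price of a non-obvious construction and a uniform SLLN over the non-compact region $\Gamma_2$; your argument uses only elementary envelopes and a uniform SLLN over a genuinely compact parameter set, which is easier to justify rigorously, but needs the explicit assumption that the true density is a non-degenerate two-component mixture. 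Note that this is not an extra hypothesis relative to the paper: the paper's assertion that the limit $-\eta(\tau_0)$ in \eqref{conv1} is strictly negative silently requires the same thing (it fails if $f_0$ equals a single vMF density with $\kappa \le \nu_0$, since $\pi \to 0$ makes $\int g \to 1$ and the supremum of $E_0\log(g/f_0)$ equal to zero), so your version merely makes explicit what the paper leaves implicit.
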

\begin{proof}
To establish a similar result for $\Gamma_2$, we define 
$$ g(\mathbf{x};\gamma) = \pi \exp \bigg(\frac{\kappa_1}{2} (\mathbf{x}^{T} \mu_1 - 1)\bigg) + (1-\pi)
c_d(\kappa_2) \exp\big( \kappa_2 \mathbf{x}^{T} \mu_2 \big) .$$
We note that the first part of the RHS above is not a vMF density, and is well defined as $\kappa_1 \rightarrow \infty$. Straightforward calculation shows that for all $\gamma \in \Gamma_2$, we have
$$ \int_{\mathbb{S}^{d-1}} g(\mathbf{x}; \gamma) dx < 1 .$$
Therefore, by Jensen's inequality, for all $\gamma \in \Gamma_2$,
\begin{eqnarray*}
 E_0 \bigg[ \log \frac{g(X;\gamma)}{f_0(X)} \bigg] < 0 ,
\end{eqnarray*}
where we recall that $f_0(\cdot)$ is the true density function. Since $\sup_{\gamma \in \Gamma_2} g(\mathbf{x};\gamma)$ is bounded and $g(\mathbf{x};\gamma)$ is continuous in $\gamma$ almost surely w.r.t. $f_0(\mathbf{x})$, it follows that
\begin{eqnarray}
\label{conv1}
 \sup_{\gamma \in \Gamma_2} \bigg\{ \frac{1}{n} \sum_{i=1}^{n} \log \Big( \frac{g(X_i;\gamma)}{f_0(X_i)} \Big) \bigg\} \rightarrow - \eta(\tau_0) < 0 , \quad \mbox{a.s.}
\end{eqnarray}
where $\eta(\tau_0) >  0 $ is an increasing function of $\tau_0$. Hence, we can find $\tau_0 > \nu_0$ such that
\begin{eqnarray}
\label{eta_ineq}
 A_4 \frac{1}{(\log \tau_0)^{2d-3}} \le \eta(\nu_0)/4 < \eta(\tau_0)/4 .
\end{eqnarray}
For any observation in $D_1$, its log-likelihood contribution is no larger than
$$ \log (A_3 \sqrt{\kappa_1} ) + \log g(\mathbf{x}; \gamma) .$$
For sufficiently large $\tau_0$, the log-likelihood contribution of any observation not in $D_1$ is no more than $\log g(\mathbf{x};\gamma)$. This follows since for large enough $\kappa_1 > \tau_0$, 
\begin{eqnarray*}
 c_d(\kappa_1) e^{\kappa_1 \mathbf{x}^{T} \mu_1} &=& c_d(\kappa_1) e^{\kappa_1 (\mathbf{x}^{T} \boldsymbol{\mu}_1 - 1)} e^{\kappa_1} \\ 
       & \le & A_3 \sqrt{\kappa_1}  e^{\kappa_1 (\mathbf{x}^{T} \boldsymbol{\mu}_1 - 1)} \\
      & \le &  e^{ \frac{\kappa_1}{2} (\mathbf{x}^{T} \boldsymbol{\mu}_1 - 1) } .
\end{eqnarray*}
Therefore, the penalized log-likelihood is almost surely bounded above by
\begin{eqnarray*}
 & & \sup_{\Gamma_2} pl_n(\gamma) - pl_n(\gamma_0)  \\
 &\le & \sup_{\kappa_1 \ge \tau_0} \bigg\{ \sum_{i \in D_1} \log(A_3 \sqrt{\kappa_1}) + \tilde{p}_n(\kappa_1) \bigg\}
  + \sup_{\Gamma_2} \bigg\{ \sum_{i=1}^{n} \log \frac{g(X_i;\gamma)}{f_0(X_i)} \bigg\} + p_n(\boldsymbol{\kappa}_0)  \\
  & \le & A_4 n \frac{1}{(\log \tau_0)^{2d-3}} - \frac{3}{4} \eta(\tau_0) n + p_n(\boldsymbol{\kappa}_0)  \\
  & \le & -\frac{1}{2} \eta(\tau_0) n +p_n(\boldsymbol{\kappa}_0) \rightarrow -\infty
\end{eqnarray*}
where $\boldsymbol{\kappa}_0$ is the vector of the concentration parameters of the true measure $\gamma_0$, the second inequality follows from \eqref{lik_bound1} and \eqref{conv1}, and the last inequality follows from \eqref{eta_ineq}.

\end{proof}

\bibliographystyle{asa}
\bibliography{main}

\end{document}